\documentclass[11pt,a4paper]{article}

\usepackage{amsmath,amssymb,amsthm}
\usepackage{geometry}
\usepackage[colorlinks=true,citecolor=blue,linkcolor=blue,urlcolor=blue]{hyperref} 
\usepackage{mathtools}
\usepackage{enumitem}
\usepackage{graphicx}
\usepackage{authblk}

\newtheorem{theorem}{Theorem}[section]
\newtheorem{lemma}[theorem]{Lemma}
\newtheorem{proposition}[theorem]{Proposition}
\newtheorem{corollary}[theorem]{Corollary}
\newtheorem{definition}[theorem]{Definition}
\newtheorem{remark}[theorem]{Remark}
\newtheorem{example}[theorem]{Example}

\DeclareMathOperator{\tr}{tr}
\DeclareMathOperator{\ev}{ev}
\DeclareMathOperator{\Pair}{Pair}
\newcommand{\NC}{\mathrm{NC}_2}
\newcommand{\xh}{\hat{x}}

\newcommand{\cW}{\mathcal{W}}
\newcommand{\cL}{\mathcal{L}}
\newcommand{\cP}{\mathcal{P}}
\newcommand{\dd}{\mathrm{d}}
\newcommand{\shuffle}{\mathbin{\text{\footnotesize$\sqcup\!\sqcup$}}}

\title{Topological Recursion and Quantum Path Signatures}

\author[1]{In\^es Aniceto}
\author[2]{Thomas Cass}
\author[3]{Samuel Crew\thanks{Corresponding author  \texttt{samuel.c.crew@gmail.com}.}}
\affil[1]{School of Mathematical Sciences, University of Southampton, UK}
\affil[2]{Department of Mathematics, Imperial College
London, UK}
\affil[3]{QPerfect, Strasbourg, France}
\date{\today}

\begin{document}
\maketitle

\begin{abstract}
We introduce a non-commutative Laplace transform between functionals on path space and formal series in a tensor algebra, under which a natural convolution of path functionals becomes an algebraic product of series. Applying it to a random unitary matrix-valued path development---the quantum path signature---we show that the governing planar loop equations take a non-commutative spectral form. We then extend the loop equations to a $1/N$ genus expansion, organised by topological recursion, and obtain a hierarchy of integral equations on path space for the corrections.
\end{abstract}

\section{Introduction}\label{sec:intro}

The path signature \cite{chen1954iterated,lyons2007differential} is a collection of iterated integrals of a path $\gamma\colon[0,T]\to\mathbb{R}^d$. It is a central object in rough path theory \cite{lyons1998differential} with broad applications to machine learning. It serves as a universal nonlinear feature map for sequential data \cite{kidger2019deep,yang2022developing}, underlies signature kernel methods \cite{kiraly2019kernels,salvi2021signature}, and more recently, has been connected to random matrix theory through related randomised matrix valued path developments \cite{lou2023pcf,cirone2023neural,cass2026free,crew2025quantum}.

In the latter setting, one develops a path $\gamma$ into a matrix Lie group $G_N$ by solving a controlled differential equation 
\begin{equation}\label{eq:def-feature-Z}
\dd U = U\cdot M(\dd\gamma)
\end{equation}
with random $N\times N$ matrix coefficients. In coordinates 
\begin{equation}
M(\dd\gamma) = \sum_\mu A_\mu\,\dd\gamma^\mu
\end{equation}
for a vector of random matrices
$A_1,\ldots,A_d$. The resulting features $U(\gamma)\in G_N$ are known as the (path-)development. Taking a pair of these developments $U(\gamma)$ and $U(\sigma)$, with the same choice of $M$ but with different driving paths, induces a family of kernels on path space via the Hilbert-Schmidt inner product:
\begin{equation}
\kappa_{M}(\gamma,\sigma)=\tr\, (U(\gamma)U(\sigma)^{*}).
\end{equation}
These kernels can be used for discrimination between probability measures on paths \cite{lou2023pcf}, and is related to the theory of characteristic functions on path space \cite{chevyrev2016characteristic}. Taking the normalised matrix trace $\frac{1}{N}\tr\,U(\gamma)$---a scalar summary of the development, also known in the gauge theory context as a Wilson line---and averaging over the random matrices, one obtains a function $\frac{1}{N}\langle\tr U(\gamma;s,t)\rangle$ of the path $\gamma$, with endpoints $s\le t$. Its large-$N$ limit, the Wilson line observable
\begin{equation}\label{eq:wilson-intro}
  \cW(\gamma;s,t) := \lim_{N\to\infty}\tfrac{1}{N}\bigl\langle\tr U(\gamma;s,t)\bigr\rangle ,
\end{equation}
defined for every sub-interval $[s,t]$, is one of the main objects of study in this paper. The unitary path development in particular (where $G_N$ is taken to be the unitary group) can be implemented efficiently as a quantum circuit using sparse Pauli approximations to the Gaussian Unitary Ensemble (GUE), giving rise to a so-called quantum signature kernel \cite{crew2025quantum}. We therefore interchangeably refer to the unitary path development or the quantum path signature.

Random matrix theory and the theory of path signatures have developed largely independently. The randomised path development links them since it is both a matrix model observable and a feature on path space. This allows us to build a somewhat unexpected bridge between theoretical high energy physics and time series analysis. As we elucidate in this work, the geometric and algebraic machinery of matrix models, spectral curves, topological recursion, and the genus expansion \cite{t1993planar,migdal1983loop,eynard2016counting,eynard2007invariants}, can now be imported to the study of randomised path developments and the machine learning of sequential data. Conversely, the path signature provides new matrix model observables with an interesting path-space interpretation.

The large-$N$ limits of randomised path developments have been computed in two cases. For developments into $\mathrm{GL}(N;\mathbb{C})$ with Gaussian random matrices, Muça Cirone, Lemercier, and Salvi \cite{cirone2023neural} showed that the limiting kernel\footnote{Defined as the Hilbert-Schmidt inner product of two Gaussian path developments.} is the classical signature kernel \cite{kiraly2019kernels} $K_{\mathrm{sig}}(\gamma,\sigma) = \langle S(\gamma),S(\sigma)\rangle$, where $S(\gamma)$ is the path signature (as defined in Section~\ref{sec:setup}) and $\langle\cdot,\cdot\rangle$ is the inner product on the tensor algebra. For developments into $G_N = \mathrm{U}(N;\mathbb{C})$ with GUE matrices, Cass and Turner \cite{cass2026free} studied the large-$N$ limit of the normalised trace $\frac{1}{N}\langle\tr U(\gamma;s,t)\rangle$---the Wilson line or trace of the quantum path signature \eqref{eq:wilson-intro}. For both cases, Cass and Turner used tools from free probability \cite{voiculescu1986addition,nica2006lectures,anderson2010introduction} to establish a universality property in that the limit does not depend on the distribution beyond its second moment. They further showed that in the unitary case the Wilson line satisfies a quadratic functional equation, the planar path development equation
\begin{equation}\label{eq:CT-intro}
  \cW(\gamma;s,t) = 1 - \int_{s<u<r<t}
  \cW(\gamma;s,u)\,\cW(\gamma;u,r)\,
  \langle\dd\gamma(u),\dd\gamma(r)\rangle ,
\end{equation}
where $s<u<r<t$ are ordered times and $\dd\gamma(u),\dd\gamma(r)$ are the increments at $u,r$, paired by the Euclidean inner product. Their kernel on pairs of paths introduced in \cite{cass2026free} is built from this single-path Wilson line by concatenation. 

The planar path development equation is a \emph{loop equation}. In random matrix theory the large-$N$ limit is governed by the Schwinger--Dyson equations \cite{migdal1983loop,eynard2016counting,anderson2010introduction}. These are algebraic identities among the moments $\frac{1}{N}\langle\tr(A_{i_1}\cdots A_{i_n})\rangle$, where $A_{i_\ell}$ are random matrices with $i_\ell\in\{1,\cdots,d\}$, are the matrix-model analogue of the equations of motion in quantum field theory. The present work can be summarised by the statement that \eqref{eq:CT-intro} is the image of these equations under a particular non-commutative transform which we will introduce --- a Schwinger--Dyson equation on path space. When $d=1$ and $\gamma$ is a straight line, for instance, under a classical Laplace transform it reduces to the spectral curve of the GUE,
\begin{equation}\label{eq:semicircle-res-eq}
    R^2-zR+1=0,
\end{equation}
whose solution
\begin{equation}\label{eq:Gaussian-spectral-curve}
R(z)=(z-\sqrt{z^2-4})/2
\end{equation}
encodes the Wigner semicircle law \cite{wigner1993characteristic,anderson2010introduction}, and the Wilson line is recovered by the inverse Laplace transform as a contour integral:
\begin{equation}\label{eq:contour-intro}
  \cW(\gamma;0,\tau) = \frac{1}{2\pi i}\oint R(z)\,
 e^{iz\tau}\,\dd z = \int_{-2}^{2}e^{i y \tau}\, \frac{\sqrt{4-y^2}}{2\pi}\,\dd y.
\end{equation}
For $d\geq2$ and nonlinear paths, however, the classical Laplace transform no longer applies.

In this paper we make this relationship explicit for all $d$ and all paths. We introduce a non-commutative Laplace transform generalising \eqref{eq:contour-intro}, together with a convolution product on path space that the transform takes to a concatenation product in the tensor algebra. Under this correspondence the planar path development equation \eqref{eq:CT-intro} becomes a non-commutative spectral curve---an algebraic fixed-point equation in a free algebra---and is recovered from it by evaluation against the path signature: the transform thus maps between the algebraic geometry of the spectral curve and path space. Moreover, this geometric perspective enables generalisations of the planar path development equation that are not possible to obtain from the functional equation alone. In particular, the $1/N$ topological recursion expansion of random matrix theory translates, via the transform, into a sequence of linear Volterra integral equations on path space (Theorem \ref{thm:genus-g-loop}), providing systematic corrections to the planar limit ($N=\infty$).

\paragraph{Contributions.}
The main contributions of this work are summarised as
\begin{enumerate}
\item The definition of the non-commutative Laplace transform (Definition~\ref{def:nc-laplace}), its inverse via evaluation against the signature (Proposition~\ref{prop:inversion}), and the product--convolution property (Proposition~\ref{prop:ev-props}).

  \item The recasting, under the non-commutative Laplace transform, of the planar path development equation of Cass and Turner \cite{cass2026free} as the spectral form of the Schwinger--Dyson equations of free probability (Theorems~\ref{thm:nc-spectral} and~\ref{thm:CT}).
\item The genus-$g$ loop equations on path space 
  (Theorem \ref{thm:genus-g-loop}). Integral equations that generalise the planar path development equation of Cass--Turner to include $1/N$ corrections.
\end{enumerate}

\paragraph{Outline.}
Section~\ref{sec:setup} fixes notation and recalls the definitions of the path signature and the randomised path development. Section~\ref{sec:nc-laplace} then introduces the non-commutative Laplace transform, its inverse via evaluation against the signature, and the product--convolution property. Section~\ref{sec:spectral} applies the transform to the Schwinger--Dyson equations of the GUE: we derive the non-commutative spectral equation, recover the planar path development equation and its pair-partition solution, and connect to the classical spectral curve. Section~\ref{sec:genus} extends the construction to all orders in $1/N$, deriving the genus-$g$ loop equation on path space and verifying the genus one correction numerically. We conclude in Section~\ref{sec:discussion} with a discussion of applications and directions for future work.

\section{Preliminaries}\label{sec:setup}

We begin with a brief introduction to path signatures and randomised path developments, fixing notation along the way. We refer the reader to \cite{cass2024lecture,chevyrev2025primer} for recent more comprehensive treatments with proofs.

\paragraph{Notation.}

We work in Euclidean $\mathbb{R}^d$ with the standard inner product $\delta_{\mu\nu}$. Spatial indices $\mu,\nu,i_k\in\{1,\ldots,d\}$ are placed as superscripts on path components ($\gamma^\mu$, $\dd\gamma^\mu$, $S^{i_1\cdots i_n}$) and as subscripts on algebraic objects ($A_\mu$, $\xh_\mu$, $M_{i_1\cdots i_n}$); repeated indices in either position are summed using $\delta_{\mu\nu}$. We also take $I= i_1\cdots i_n$ as a collection of $n$ indices, a word, and use the shortened representations $c^{I}:=c^{i_1\cdots i_n}$, $e_I:=e_{i_1}\otimes\cdots\otimes e_{i_n}$ and the noncommutative product $\xh_I:=\xh_{i_1} \cdots \xh_{i_n}$. The length of a word $I$ is denoted by $|I|$.

\subsection{The path signature}\label{sec:path-sig}

Let $\gamma\colon[0,T]\to\mathbb{R}^d$ be a smooth path.  Its signature is the formal tensor series
\begin{equation}\label{eq:sig}
  S(\gamma;s,t) = \sum_{n=0}^{\infty}
  S^{i_1\cdots i_n}_{s,t}(\gamma)\;
  e_{i_1}\otimes\cdots\otimes e_{i_n}
  \;\in\; T((\mathbb{R}^d)) 
  := \prod_{n\geq 0}(\mathbb{R}^d)^{\otimes n},
\end{equation}
where $0\le s\le t\le T$, $S^{\emptyset}=1$, and $\{e_1,\ldots,e_d\}$ is the standard basis of $\mathbb{R}^d$. The degree $n$ component of the signature is the iterated integral
\begin{equation}\label{eq:sig-comp}
  S^{i_1\cdots i_n}_{s,t}(\gamma) = 
  \int_{s<t_1<\cdots<t_n<t}
    \dd\gamma^{i_1}(t_1)\cdots\dd\gamma^{i_n}(t_n)\,.
\end{equation}
The path signature is the solution to the controlled differential equation
\begin{equation}\label{eq:ODE-path-sig}
\dd S = S\otimes\dd\gamma\quad;\qquad S(\gamma;s,s)=1,
\end{equation}
in the tensor algebra and we equivalently write the solution as the path-ordered exponential $S = \cP\exp\!\int_s^t\dd\gamma$. 

We now recall three standard properties of the path signature that will be used in the following \cite{lyons2007differential,reutenauer2003free}:

\begin{itemize}
    \item \emph{Factorial decay}:\footnote{The $\|\gamma\|_1=\sup_{\mathcal{D}}\sum_k|\gamma(t_{k+1})-\gamma(t_k)|$ is the $1$-variation, where the supremum is taken over partitions $\mathcal{D}=\{s=t_0<\cdots<t_m=t\}$.} $|S^{i_1\cdots i_n}_{s,t}|\leq\|\gamma\|_1^n/n!$ 
    \item \emph{Chen's identity}: $S_{s,t} = S_{s,u}\otimes S_{u,t}$ for $s<u<t$\,.
    \item \emph{Shuffle product}: $S^I\cdot S^J = S^{I\, \shuffle\, J}$, where the shuffle $I\shuffle J$ is the (formal) sum over all
    interleavings of the words $I$ and $J$ that preserve the order of the letters within each word.\footnote{For instance $1\shuffle 23 = 123+213+231$, so that $S^1 S^{23} = S^{123}+S^{213}+S^{231}$.}
\end{itemize}

\subsection{Quantum path signature}

Fix $d$ independent $N\times N$ GUE matrices $A_1,\ldots,A_d$ with joint law $\propto\exp(-\frac{N}{2}\sum_\mu\tr A_\mu^2)$.  The \emph{unitary path development} \cite{lou2023pcf,chevyrev2016characteristic}  or \emph{quantum path signature} \cite{crew2025quantum} $U(\gamma;s,t)$ in $\mathrm{U}(N,\mathbb{C})$ is defined as the solution to the matrix valued ODE
\begin{equation}\label{eq:dev}
  \dd U = i U A_\mu\,\dd\gamma^\mu,
  \qquad U(\gamma;s,s)=\mathbf{1}_N.
\end{equation}
Since the generators $iA_\mu$ are anti-Hermitian, $U$ is unitary, hence the terminology \emph{quantum}. The path development can be expanded against the signature:
\begin{equation}\label{eq:U-expand}
    U(\gamma;s,t) = \sum_{I} S^{I}_{s,t}(\gamma)\,(iA)_I, \qquad (iA)_I := (iA_{i_1})\cdots(iA_{i_n}) = i^{|I|}A_I.
\end{equation}
which follows from differentiating and using the signature ODE \eqref{eq:ODE-path-sig}. Taking the normalised trace and the large-$N$ limit we then define the \emph{planar Wilson line} as an expectation value:
\begin{equation}\label{eq:wilson}
  \cW(\gamma;s,t) 
  = \lim_{N\to\infty}\frac{1}{N}
  \bigl\langle\tr\,U(\gamma;s,t)\bigr\rangle
  = \sum_{I}S^{I}_{s,t}(\gamma)\,M_{I},
\end{equation}
where the \emph{planar moments} are \begin{equation}\label{eq:planar-moments}
 M_I := \lim_{N\to\infty}\frac{1}{N}\bigl\langle\tr\,(iA)_I\bigr\rangle  = i^{|I|}\lim_{N\to\infty}\frac{1}{N}\bigl\langle\tr A_I\bigr\rangle,
  \qquad M_\emptyset = 1.
\end{equation}

Using the genus expansion \cite{t1993planar,anderson2010introduction}, one can show that this limit exists and equals the sum over planar (genus-$0$) Wick contractions.\footnote{The genus enters through 't Hooft's double line representation of Wick contractions \cite{t1993planar}: each contraction of a trace of random matrices is drawn as a ribbon, and a set of contractions tiles an orientable surface whose genus $g$ weights it by $N^{-2g}$. The planar ($g=0$) terms dominate as $N\to\infty$. The higher genus corrections are the subject of Section~\ref{sec:genus} and we review these ideas more precisely in definition \ref{def:genus-pp}.}  The interchange of the limit and the sum in \eqref{eq:wilson} is justified by the factorial decay of the signature and the bound on the growth of the planar moments ($|M_{i_1\cdots i_n}|\leq 4^{n/2}$); see \cite{cass2026free} for details.

\section{The non-commutative Laplace transform}\label{sec:nc-laplace}

In this Section we define the non-commutative Laplace transform and its inverse in terms of the tensor algebra and the path signature. We demonstrate several useful properties which we later apply to random matrix models in Section~\ref{sec:spectral}.

\subsection{Definition}

Let $\xh_1,\ldots,\xh_d$ be non-commuting indeterminate variables. We write $\mathbb{C}\langle\langle\xh_1,\ldots,\xh_d\rangle\rangle$ for the algebra of formal power series in the $\xh_\mu$, i.e.\ formal sums $\sum_I c^{I}\,\xh_I$ graded by word length, with the concatenation product. Identities in this Section are to be understood degree-by-degree: each homogeneous component is a finite sum, and equality means equality of every homogeneous component. 

\begin{definition}[Evaluation map]\label{def:ev}
The \emph{evaluation map}
$\ev\colon\mathbb{C}\langle\langle\xh\rangle\rangle\to\mathcal{S}$ is the linear map sending each word to the corresponding signature functional,
\begin{equation}\label{eq:ev}
  \ev\colon\xh_I\longmapsto S^{I},
  \qquad \ev(1)=1,
\end{equation}
extended linearly to a series $\Omega=\sum_I c^I\xh_I$ by $\ev(\Omega)=\sum_I c^I S^I$. Here $S^I$ denotes the path functional $(\gamma;s,t)\mapsto S^{I}_{s,t}(\gamma)$, and $\mathcal{S}$ is the space of path functionals of the form $F(\gamma;s,t)=\sum_{I}c^{I}\,S^{I}_{s,t}(\gamma)$ whose coefficients have at most exponential growth, i.e.\ $|c^I|\leq C\lambda^{|I|}$ for some constants $C,\lambda>0$. For a fixed path $\gamma$ we write $\ev_\gamma(\Omega):=\ev(\Omega)(\gamma;s,t)$ for the value of this functional at $\gamma$.
\end{definition}

\begin{proposition}[Coefficient correspondence]\label{prop:inversion}
On series of at most exponential growth---those $\Omega=\sum_I c^I\xh_I$ with $|c^I|\leq C\lambda^{|I|}$ for some $C,\lambda>0$---the evaluation map is a linear isomorphism onto $\mathcal{S}$:
\begin{enumerate}
\item the exponential growth of the coefficients and the factorial decay of the signature ensure $\ev(\Omega)=\sum_I c^I S^I$ converges absolutely on every path of finite $1$-variation, so $\ev$ is well defined;
\item conversely, every $F\in\mathcal{S}$ determines its coefficients $c^I$ uniquely: the iterated integrals $\{S^I\}$, indexed by words $I$, are linearly independent as functionals on path space
\cite{chen1957integration,reutenauer2003free,hambly2010uniqueness}.
\end{enumerate}
\end{proposition}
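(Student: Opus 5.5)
The plan is to prove the two parts separately. Surjectivity onto $\mathcal{S}$ holds by the definition of $\mathcal{S}$, and linearity is clear. So the content is (1) that $\ev$ is well defined and (2) that it is injective, which by linearity means: if $\sum_I c^I S^I\equiv 0$ on path space, then every $c^I=0$.

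For (1), fix a path $\gamma$ with $L:=\|\gamma\|_1<\infty$ on $[s,t]$. By the factorial decay bound, $|S^I_{s,t}(\gamma)|\le L^{|I|}/|I|!$. There are $d^n$ words of length $n$, so
\begin{equation*}
\sum_I |c^I|\,|S^I_{s,t}(\gamma)| \le C\sum_{n\ge0} \frac{(d\lambda L)^n}{n!} = C e^{d\lambda L}<\infty .
\end{equation*}
This gives absolute convergence for every sub-interval. The same bound shows the degree-by-degree sums are consistent with the pointwise value, so $\ev(\Omega)$ is a well-defined functional.

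For (2), suppose $F=\sum_I c^I S^I$ vanishes identically. I would extract the coefficients using straight-line and piecewise-linear test paths. Take $\gamma$ to be the concatenation of $m$ linear pieces with increments $\tau_1 v_1,\dots,\tau_m v_m$, where $v_k\in\mathbb{R}^d$. Chen's identity gives $S(\gamma)=\exp(\tau_1 v_1)\otimes\cdots\otimes\exp(\tau_m v_m)$. Substituting this, $F(\gamma)$ becomes a power series in $(\tau_1,\dots,\tau_m)$ that converges absolutely everywhere by the estimate of (1). Such a series is entire, so its vanishing identically forces every Taylor coefficient to vanish. The coefficient of $\tau_1\tau_2\cdots\tau_n$ with $m=n$ equals $\sum_{|I|=n} c^I v_1^{i_1}\cdots v_n^{i_n}$. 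Choosing $v_k=e_{i_k}$ then isolates $c^{i_1\cdots i_n}=0$.

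The main obstacle is justifying the rearrangement into a power series in the $\tau_k$ and the identification of its multilinear coefficient. The multilinear coefficient picks out exactly one factor from each exponential, namely the degree-one term $\tau_k v_k$. Absolute convergence from (1), applied uniformly on compact sets in $\tau$, legitimises the rearrangement. Alternatively, one may simply cite the linear independence of iterated integrals \cite{chen1957integration,reutenauer2003free}, or signature uniqueness \cite{hambly2010uniqueness}, as the statement already indicates.
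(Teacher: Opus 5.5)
Your proof is correct. For part (1) it is the paper's argument made explicit: the bound $C e^{d\lambda L}$ is exactly what ``exponential growth against factorial decay'' amounts to.

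For part (2) you take a different route. The paper gives no argument beyond citing the linear independence of iterated integrals (Chen, Reutenauer, Hambly--Lyons). You instead prove injectivity directly:
\begin{enumerate}
\item evaluate on piecewise-linear paths with increments $\tau_k v_k$;
\item use Chen's identity to write $F$ as an everywhere-convergent power series in $(\tau_1,\dots,\tau_m)$;
\item read off $c^{i_1\cdots i_n}$ as the coefficient of $\tau_1\cdots\tau_n$ after setting $v_k=e_{i_k}$.
\end{enumerate}

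What your route buys is that it handles infinite series head-on. The cited linear independence concerns \emph{finite} linear combinations, so the paper's argument tacitly needs an extra step before the citation applies. For instance, dilate $\gamma\mapsto\delta\gamma$, so that $F(\delta\gamma)=\sum_n\delta^n\sum_{|I|=n}c^IS^I(\gamma)$ is entire in $\delta$; then each homogeneous component must vanish separately. Your multilinear-coefficient extraction separates degrees and proves finite independence in one stroke. The paper's route is shorter but leans entirely on the literature.

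Two small points should be tightened.
\begin{itemize}
\item Rearranging into monomials in $\tau$ needs absolute convergence of the \emph{expanded} double series, not just of $\sum_I|c^I|\,|S^I|$. The same estimate gives this, with bound $C\exp\bigl(\lambda\sum_k|\tau_k|\,\|v_k\|_{\ell^1}\bigr)$.
\item If the domain of $\mathcal S$ is taken to be smooth paths, as in Section~\ref{sec:setup}, your piecewise-linear test paths should first be smoothly reparametrised. This leaves the signature unchanged.
\end{itemize}
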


\begin{definition}[Non-commutative Laplace transform]\label{def:nc-laplace}
The \emph{non-commutative bare Laplace transform} $\cL_B\colon\mathcal{S}\to\mathbb{C}\langle\langle\xh_1,\ldots,\xh_d\rangle\rangle$ is the inverse of the evaluation map, i.e. 
\begin{equation}
  \cL_B\colon S^{I}\longmapsto\xh_I ,
\end{equation}
extended linearly. For each $\mu\in\{1,\ldots,d\}$ the \emph{$\mu$-directional non-commutative Laplace transform} is the degree-raising map
\begin{equation}\label{eq:nc-laplace}
  \cL_\mu(F) := \cL_B(F)\,\xh_\mu ,
\end{equation}
appending the letter $\xh_\mu$ on the right.
\end{definition}

The transform is elementary. $\cL_B$ simply records the coefficients of a functional in the basis of iterated integrals, i.e. it is a non-commutative generating function. The power lies in the interpretation, and it enters through the simple operation of appending a letter. It is not the bare $\cL_B$ but the directional transform $\cL_\mu = \cL_B(\,\cdot\,)\,\xh_\mu$ that corresponds to the familiar Laplace theory. The appended letter implements integration against $\dd\gamma^\mu$ (Proposition~\ref{prop:ev-props}), giving a convolution theorem and a Laplace transform calculus that reduces to the classical Laplace transform for $d=1$ (Example~\ref{ex:classical}). As we will see in the following, this converts the quadratic loop equations of path developments into algebra, just as the classical transform converts differential
equations into polynomial ones.

\begin{example}\label{ex:classical}
For $d=1$ and the straight line $\gamma(t)=t$ on $[0,\tau]$, the signature components are $S^{n}_{0,\tau}=\tau^n/n!$, so the basis functionals of $\mathcal S$ are the monomials $\tau^n/n!$. Their classical Laplace transforms are
\begin{equation*}
  \int_0^\infty e^{-z\tau}\,\frac{\tau^n}{n!}\,\dd\tau
  = \frac{1}{z^{\,n+1}}.
\end{equation*}
Identifying $\xh$ with $1/z$, the right-hand side is the word $\xh^{\,n+1}$, which is exactly the directional transform $\cL_\mu(S_n)=\cL_B(S_n)\,\xh=\xh^{\,n+1}$. It is therefore $\cL_\mu$, and not $\cL_B$, that restricts to the classical Laplace transform. The appended letter plays the role of the degree shift by $z^{-1}$ in the classical formula. In the reverse direction, $\ev_\gamma$ sends $\xh^{\,n}\mapsto\tau^n/n!$, dividing the $n$-th coefficient by $n!$. The inverse transform is a Borel summation.
\end{example}

Applying the bare transform to the Wilson line \eqref{eq:wilson} defines the \emph{moment generating function} and the \emph{non-commutative directional resolvent}:
\begin{equation}\label{eq:R-nc}
  W := \cL_B[\cW] = \sum_{I} M_{I}\,\xh_I \;\in\;\mathbb{C}\langle\langle\xh_1,\ldots,\xh_d\rangle\rangle,\qquad
  R_\mu := \cL_\mu[\cW] = W\,\xh_\mu .
\end{equation}
For $d=1$, under the identification $\xh=i/z$ one has $W_{\mathrm{ab}}(i/z)=z\,R(z)$, with $R$ the resolvent of the semicircle law  \eqref{eq:Gaussian-spectral-curve}.

\subsection{The inverse transform}\label{sec:inverse}

By construction the transform pair inverts: $\ev\circ\cL_B=\mathrm{id}_{\mathcal S}$ and, on series of at most exponential growth, $\cL_B\circ\ev=\mathrm{id}$, both maps being mutually inverse on the basis ($S^I\mapsto\xh_I\mapsto S^I$) and extending by Proposition~\ref{prop:inversion}. In particular the planar Wilson line is recovered from the moment generating function \eqref{eq:R-nc} by evaluation,
\begin{equation}\label{eq:inverse}
  \cW(\gamma;s,t)=\ev_\gamma(W).
\end{equation}
This inversion is a duality pairing. Define the signature element
\begin{equation}\label{eq:E-gamma}
  E_\gamma = \sum_{I}S^{I}_{s,t}(\gamma)\,\xh_I \;\in\;\mathbb{C}\langle\langle\xh\rangle\rangle,
\end{equation}
the unique solution of $\dd E = E\,\xh_\mu\,\dd\gamma^\mu$, $E(s)=1$: the
signature of $\gamma$ viewed as an element of the tensor algebra. It is
dual to $\ev_\gamma$ in the sense that
\begin{equation}\label{eq:pairing}
  \ev_\gamma(\Omega) = \langle \Omega, E_\gamma\rangle
  \qquad\text{for all }\Omega,
  \qquad \langle\xh_I,\xh_J\rangle = \delta_{IJ}.
\end{equation}
Thus $E_\gamma$ plays the role of the classical Laplace kernel and the word
pairing that of the contour integral.
For $d=1$, taking $\xh=z\in\mathbb{C}$, $E_\gamma = e^{z\,\Delta\gamma}$ with $\Delta\gamma=\gamma(t)-\gamma(s)$, and $\langle W,E_\gamma\rangle=\sum_n M_n(\Delta\gamma)^n/n!$ is exactly the series evaluated by $\frac{1}{2\pi i}\oint R(z)\,e^{iz\Delta\gamma}\,\dd z$, matching \eqref{eq:contour-intro}; see Example~\ref{ex:classical}.

For any non-straight path in $d\geq 2$, the word pairing $\langle W,E_\gamma\rangle$ is then the natural non-commutative analogue of the contour integral form of the inverse Laplace transform.

\subsection{Properties of the transform pair}\label{sec:props}

The evaluation map $\ev_\gamma$ and directional Laplace transform also interact with the algebraic structure of $\mathbb{C}\langle\langle\xh\rangle\rangle$ in ways that parallel classical Laplace transform theory. In this Section, we first define a generalised path version of convolution and then present the main properties of the non-commutative Laplace transform.

\begin{definition}[Path convolution]\label{def:path-conv}
For functions $f(\gamma;s,t)$ and $g(\gamma;s,t)$ defined for $s\leq t$, the \emph{path convolution} is
\begin{equation}\label{eq:path-conv}
  (f\star_\gamma g)(s,t)  = \int_s^t\!\int_s^r  f(\gamma;s,u)\,g(\gamma;u,r)\, \langle\dd\gamma(u),\dd\gamma(r)\rangle.
\end{equation}
\end{definition}

\begin{theorem}[Properties of the transform]\label{prop:ev-props}
Let $f,g\in\mathcal S$ and let $\gamma\colon[s,t]\to\mathbb{R}^d$ be smooth. Then:
\begin{enumerate}
\item \textbf{Integration}:
  $\ev_\gamma\bigl(\cL_\mu(f)\bigr)
  =\int_s^t f(\gamma;s,u)\,\dd\gamma^\mu(u)$.
\item \textbf{Product--convolution}: $\ev_\gamma\Bigl(\sum_{\mu=1}^d\cL_\mu(f)\,\cL_\mu(g)\Bigr)  = f\star_\gamma g$.
\item \textbf{Shuffle--product}: $\ev_\gamma\bigl(\cL_B(f)\shuffle\cL_B(g)\bigr)=f\cdot g$.
\end{enumerate}
\end{theorem}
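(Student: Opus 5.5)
All three statements are linear in each argument, so the plan is to check them on basis elements $f=S^I$, $g=S^J$ and then pass to general $f,g\in\mathcal S$. For that passage I will use absolute convergence, as in Proposition~\ref{prop:inversion}. If $|c^I|\le C\lambda^{|I|}$, then $f$ has exponential-growth coefficients and $\cL_\mu(f)$ does too. The product $\sum_\mu\cL_\mu(f)\cL_\mu(g)$ has degree-$n$ coefficients bounded by $d\,(n+1)\,C^2\lambda^{n}$, since that many splittings of a word contribute, which is still exponential. Shuffles of words of lengths $p,q$ produce $\binom{p+q}{p}\le 2^{p+q}$ terms, so shuffle products also stay of exponential growth. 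Factorial decay of $S^I$ then justifies swapping sums with the integrals on paths of finite $1$-variation.

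\textbf{Integration.} For $f=S^I$ we have $\cL_\mu(S^I)=\xh_I\xh_\mu=\xh_{I\mu}$, so $\ev_\gamma(\cL_\mu S^I)=S^{I\mu}_{s,t}$. Splitting off the last variable of the iterated integral \eqref{eq:sig-comp} gives
\[
S^{I\mu}_{s,t}=\int_s^t S^I_{s,u}\,\dd\gamma^\mu(u),
\]
which is exactly the integrated form of the signature ODE \eqref{eq:ODE-path-sig}. Summing over $I$ with coefficients $c^I$ gives the claim for general $f$, using dominated convergence under the integral.

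\textbf{Product--convolution.} For basis elements,
\[
\sum_\mu\cL_\mu(S^I)\cL_\mu(S^J)=\sum_\mu\xh_{I\mu J\mu},
\]
and its evaluation is $\sum_\mu S^{I\mu J\mu}_{s,t}$. I will peel off the last letter, as in part 1, to write this as $\int_s^t S^{I\mu J}_{s,r}\,\dd\gamma^\mu(r)$. Chen's identity applied at the time $u$ of the $\mu$ letter, or equivalently splitting the simplex at that time, then gives
\[
S^{I\mu J}_{s,r}=\int_s^r S^I_{s,u}\,\dd\gamma^\mu(u)\,S^J_{u,r}.
\]
Combining the two and summing over $\mu$ with the Euclidean pairing gives
\[
\int_{s<u<r<t} S^I_{s,u}\,S^J_{u,r}\,\langle\dd\gamma(u),\dd\gamma(r)\rangle=(S^I\star_\gamma S^J)(s,t).
\]
Bilinearity plus the convergence argument above extend this to $f,g\in\mathcal S$. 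The step that needs care is the simplex decomposition: I will verify directly from \eqref{eq:sig-comp} that the region $\{s<t_1<\dots<t_{|I|}<u<t'_1<\dots<t'_{|J|}<r\}$, integrated in the stated order, reproduces $S^{I\mu J\mu}$ exactly. Here $u$ and $r$ are the times of the two appended letters.

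\textbf{Shuffle--product.} We have $\cL_B(S^I)\shuffle\cL_B(S^J)=\xh_I\shuffle\xh_J$, extended bilinearly. By the shuffle identity $\ev_\gamma(\xh_{I}\shuffle\xh_J)=S^{I\shuffle J}=S^IS^J$. For general $f,g$ the double series $\sum_{I,J}c^Id^JS^IS^J$ converges absolutely because each factor series does, so it can be regrouped by total degree. This grouping matches the degree-by-degree definition of the shuffle of series, so the identity extends to $f\cdot g$.

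The main obstacle is the analytic bookkeeping in parts 2 and 3: confirming that the algebraic products of series stay in the exponential-growth class and that the rearrangements are legitimate. The combinatorial identities themselves are routine consequences of Chen's identity and the shuffle identity.
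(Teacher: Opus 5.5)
Your proposal is correct and follows the paper's proof essentially verbatim. It uses the last-letter factorisation of the simplex for (i), the marked-pair decomposition $S^{K\mu L\mu}_{s,t}=\int_{s<u<r<t}S^K_{s,u}S^L_{u,r}\,\dd\gamma^\mu(u)\,\dd\gamma^\mu(r)$ for (ii), and the word-level shuffle identity for (iii); for (ii) you reach the decomposition in two steps, first peeling off the last letter and then splitting at the inner one, where the paper does it in one. The exponential-growth and absolute-convergence bookkeeping you add for passing from basis words to general $f,g\in\mathcal S$ is a useful supplement to the paper's bare ``by linearity''.
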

\begin{proof}
By linearity it suffices to prove each identity on basis words, with $\cL(f)=\sum_K c^K\xh_K$, $\cL(g)=\sum_L d^L\xh_L$.
\begin{enumerate}
\item Since $\cL_\mu(f)=\cL(f)\,\xh_\mu$, it suffices to show $\ev_\gamma(\xh_I\,\xh_\mu)=\int_s^t S^I_{s,u}\,\dd\gamma^\mu(u)$: the simplex integral for $S^{I\mu}_{s,t}$, conditioned on its last coordinate $u:=t_n$, factorises as $S^{I\mu}_{s,t}=\int_s^t S^{I}_{s,u}\,\dd\gamma^\mu(u)$.
\item Since $\cL_\mu(f)\,\cL_\mu(g)=\cL_B(f)\,\xh_\mu\,\cL_B(g)\,\xh_\mu$, it suffices to prove the word identity
\begin{equation}\label{eq:marked}
  S^{K\mu L\mu}_{s,t}= \int_{s<u<r<t} S^{K}_{s,u}\,S^{L}_{u,r}\,\dd\gamma^{\mu}(u)\,\dd\gamma^{\mu}(r)\,.
\end{equation}
By definition \eqref{eq:sig-comp}, $S^{K\mu L\mu}_{s,t}$ is the integral over $s<t_1<\cdots<t_n<t$ ($n=|K|+|L|+2$), in which the two appended letters $\mu$ correspond to the coordinates $u:=t_{|K|+1}$ and $r:=t_n$. Integrating over the remaining coordinates at fixed $(u,r)$, the ordering constraint decomposes: the coordinates carrying $K$ range over $s<t_1<\cdots<t_{|K|}<u$, giving $S^{K}_{s,u}$; those carrying $L$ range over $u<\cdots<r$, giving $S^{L}_{u,r}$; and the marked pair ranges over $s<u<r<t$. This proves \eqref{eq:marked}. Summing over $K,L$ and over $\mu$, which contracts the marked increments into $\langle\dd\gamma(u),\dd\gamma(r)\rangle$, gives (iii).
\item At the level of words the statement is a rephrasing of the shuffle identity of Section~\ref{sec:setup}, $\ev_\gamma(\xh_I\shuffle\xh_J)=S^{I\shuffle J}_{s,t} =S^I_{s,t}\,S^J_{s,t}$.

\end{enumerate}
\end{proof}

\begin{example}[Classical Laplace transform]\label{ex:classical-reduction} For $d=1$ and the straight line $\gamma(t)=t$ on $[0,\tau]$, every functional in $\mathcal S$ depends only on the interval length, $f(s,t)=f(t-s)$, and the path convolution collapses to the integral of the classical convolution:
\begin{equation}
  (f\star_\gamma g)(0,\tau)  = \int_0^\tau\!\!\int_0^r f(u)\,g(r-u)\,\dd u\,\dd r  = \int_0^\tau (f*g)(r)\,\dd r .
\end{equation}
Since $\cL_\mu$ reduces to the classical Laplace transform on the monomial basis (Example~\ref{ex:classical}), properties (ii) and (iii) reduce to the classical rules $\cL[\int_0^t f]=z^{-1}\hat f$ and $\cL[f*g]=\hat f\,\hat g$: the transform of the path convolution is $z^{-1}\hat f\hat g$. On monomials, the classical beta function identity directly gives 
 \begin{equation}
 \frac{\tau^m}{m!}\star_\gamma\frac{\tau^n}{n!}  =\frac{\tau^{m+n+2}}{(m+n+2)!}\,.
 \end{equation}
\end{example}

Property (ii) is the analogue of the classical $\cL^{-1}[z^{-1}\hat f]=\int_0^t f$ where $\mathcal{L}$ is the classical Laplace transform and $\hat{f}=\cL[f]$. Property (iii) is the convolution theorem, the analogue of $\cL^{-1}[\hat f\,\hat g]=f*g$, with the product of directional transforms summed over directions; property (iv) is its dual, with the shuffle playing the role of the complex convolution of classical Laplace theory.

The $\mu$-directional Laplace transform $\cL_\mu$ is an isomorphism of $\mathcal S$ onto $\mathbb{C}\langle\langle\xh\rangle\rangle\,\xh_\mu$, the words ending in $\mu$ so that no directional transform reaches the empty word. Correspondingly, the path convolution has no unit. Classically the missing unit is restored by including distributions ($\cL[\delta]=1$). The constant functional $S^{\emptyset} = 1$ is instead the unit of the pointwise product---indeed $\cL_\mu(1)=\xh_\mu$, not the empty word.

\section{Application to randomised path developments}\label{sec:spectral}

We now apply the non-commutative Laplace transform to randomised matrix valued path developments using the Schwinger--Dyson equations of random matrix theory. We obtain a non-commutative spectral equation governing the development, and derive the planar path development equation \eqref{eq:CT-intro} as a corollary.

The central algebraic identity, the planar Schwinger--Dyson equation, is a recursion equation for the planar moments \eqref{eq:planar-moments}:
\begin{equation}\label{eq:SD}
  M_{i_1\cdots i_n\,\mu} = -\sum_{k=1}^{n}\delta_{i_k,\mu}\,M_{i_1\cdots i_{k-1}}\,M_{i_{k+1}\cdots i_n}\,.
\end{equation}
This equation can be obtain by integration by parts in the matrix integral
\begin{equation}\label{eq:matrix-int-A-mu}
    \langle\tr(A_{i_1}\cdots A_{i_n}A_\mu)\rangle\,,
\end{equation}
where the average $\langle\,\cdot\,\rangle$ is the GUE matrix integral with the Gaussian potential of the law fixed in Section~\ref{sec:setup}:
\begin{equation}\label{eq:GUE-average}
\langle f\rangle = \frac{1}{Z}\int f(A)\,e^{-N\tr V(A)}\,\prod_\mu\dd A_\mu\,,\quad V(A)=\frac{1}{2}\sum_\nu A_\nu^2\,,
\end{equation}
$\dd A_\mu$ is the flat measure on Hermitian matrices, and $Z$ is the normalisation. Differentiating \eqref{eq:matrix-int-A-mu} with respect to $(A_\mu)_{ij}$ and using $\langle\partial/\partial A_{ij}[\cdots]\rangle=0$ gives, in the large-$N$ limit where the two-trace expectation factorises into a product of moments, the Schwinger–Dyson recursion \eqref{eq:SD}.

\subsection{The non-commutative spectral equation}\label{sec:nc-spectral}

In the following, for notational convenience, we define  
\begin{equation}\label{eq:def-Phi}
\Phi(A) = \sum_{\mu=1}^d \xh_\mu\,A\,\xh_\mu\,.
\end{equation}
The operator $\Phi$ relates the bare and directional transforms: for $f,g\in\mathcal S$,
\begin{equation*}
  \cL_B(f)\cdot\Phi\bigl(\cL_B(g)\bigr)
  = \sum_{\mu=1}^d \cL_\mu(f)\,\cL_\mu(g),
\end{equation*}
turning a product of bare transforms into the summed product of directional ones, whose evaluation is the path convolution $f\star_\gamma g$ (Proposition~\ref{prop:ev-props}(iii)).

We start with a self-contained treatment of the well-known non-commutative spectral curve \cite{anderson2010introduction} in the theorem below.

\begin{theorem}[Non-commutative spectral equation]\label{thm:nc-spectral}
The moment generating function $W$ of equation \eqref{eq:R-nc} is the unique solution in $\mathbb{C}\langle\langle\xh_1,\ldots,\xh_d\rangle\rangle$ with $W|_{\xh=0}=1$ of the fixed point equation
\begin{equation}\label{eq:nc-spectral}
  W = 1 - W\cdot\Phi(W) = 1 - \sum_{\mu=1}^d W\,\xh_\mu\,W\,\xh_\mu.
\end{equation}
\end{theorem}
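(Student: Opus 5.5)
The plan is to compare coefficients of words on both sides of \eqref{eq:nc-spectral} and show that the resulting identity is exactly the planar Schwinger--Dyson recursion \eqref{eq:SD}. Uniqueness will then follow from a degree-by-degree induction. Throughout, identities are read componentwise by word, as stipulated in Section~\ref{sec:nc-laplace}.

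\textbf{Step 1 (coefficient extraction).} I expand the right-hand side using $W=\sum_I M_I\xh_I$:
\begin{equation*}
W\,\xh_\mu\,W\,\xh_\mu=\sum_{K,L}M_K M_L\,\xh_K\xh_\mu\xh_L\xh_\mu .
\end{equation*}
Every word appearing here ends in a letter and has length at least $2$. The empty-word component of \eqref{eq:nc-spectral} therefore reads $M_\emptyset=1$, which holds.

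Now fix a nonempty word $J=i_1\cdots i_n\mu$ with last letter $\mu$. Its coefficient on the right-hand side is
\begin{equation*}
-\sum_{(K,L)}M_K M_L ,
\end{equation*}
where the sum runs over the factorisations $J=K\mu L\mu$. Such a factorisation is the same as a choice of position $k\in\{1,\ldots,n\}$ with $i_k=\mu$, together with $K=i_1\cdots i_{k-1}$ and $L=i_{k+1}\cdots i_n$. This coefficient is therefore precisely the right-hand side of \eqref{eq:SD}.

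For words of length one, $J=\mu$ and $n=0$. The sum is empty, so the identity requires $M_\mu=0$, which is the vanishing of the first GUE moment. This is also the $n=0$ case of \eqref{eq:SD}.

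The same computation gives the converse: \eqref{eq:nc-spectral} holds if and only if \eqref{eq:SD} holds for all words. Since $W$ satisfies \eqref{eq:SD}, it solves the fixed-point equation.

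\textbf{Step 2 (uniqueness).} Let $\tilde W=\sum_I \tilde c^I\xh_I$ be any solution with $\tilde c^\emptyset=1$. By Step 1, its coefficients satisfy
\begin{equation*}
\tilde c^{\,i_1\cdots i_n\mu}=-\sum_k\delta_{i_k,\mu}\,\tilde c^{\,i_1\cdots i_{k-1}}\,\tilde c^{\,i_{k+1}\cdots i_n}.
\end{equation*}
Both words on the right have length strictly less than $n+1$. Hence the coefficients of each degree are determined by those of lower degree, starting from $\tilde c^\emptyset=1$. Induction on word length gives $\tilde c^I=M_I$ for every $I$.

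Equivalently, the map $X\mapsto 1-X\Phi(X)$ is a contraction in the $\xh$-adic topology, since it raises degree by at least $2$ in the correction terms.

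\textbf{Main obstacle.} The only substantive input is \eqref{eq:SD} itself. The paper sketches its derivation via integration by parts: differentiating with respect to $(A_\mu)_{ij}$ splits the trace at each occurrence of $A_\mu$, and large-$N$ factorisation of the resulting two-trace terms gives the product form. The sign and the factors of $i$ then need careful tracking. With the normalisation $M_I=i^{|I|}\lim\frac1N\langle\tr A_I\rangle$, the raw identity
\begin{equation*}
m_{I\mu}=\sum_k\delta_{i_k\mu}\,m_{\cdots}\,m_{\cdots}
\end{equation*}
for the real moments $m$ acquires the factor $i^{n+1}/(i^{k-1}i^{n-k})=i^2=-1$. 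This produces the minus sign in \eqref{eq:SD}, and hence in \eqref{eq:nc-spectral}. I would verify this bookkeeping explicitly. Beyond that, the argument is routine combinatorics of word factorisations.
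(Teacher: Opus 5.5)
Your proof is correct and follows essentially the same route as the paper. Existence comes from matching word coefficients against the planar Schwinger--Dyson recursion \eqref{eq:SD}, which the paper packages as $W=1+\sum_\mu G_\mu\xh_\mu$ with $G_\mu=-W\xh_\mu W$, uniqueness is the same degree induction using that $W\cdot\Phi(W)$ raises degree by two, and your explicit handling of length-one words ($M_\mu=0$), which the paper's sum over $n\geq1$ uses silently, is a small welcome addition.
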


\begin{proof}
\emph{Existence.}
Every nonempty word may be factored uniquely as $I=J\mu$. Grouping the terms of $W$ by their
final letter and factoring the trailing $\xh_\mu$ to the right we may write
\begin{equation}\label{eq:W-peel}
  W = M_{\emptyset} + \sum_{\mu=1}^d\Bigl(\sum_J M_{J\mu}\,\xh_J\Bigr)\xh_\mu
    = 1 + \sum_{\mu=1}^d G_\mu\,\xh_\mu,
  \qquad
  G_\mu := \sum_J M_{J\mu}\,\xh_J,
\end{equation}
where $G_\mu$ generates the moments whose last letter is $\mu$. We show $G_\mu = -W\xh_\mu W$. Contracting the Schwinger-Dyson equation \eqref{eq:SD} with $\xh_{i_1}\cdots\xh_{i_n}$ and summing over all words of all lengths $n\geq1$,
\begin{equation}\label{eq:Gmu-step1}
  G_\mu = -\sum_{n\geq 1}\sum_{k=1}^n\delta_{i_k,\mu}\,M_{i_1\cdots i_{k-1}}M_{i_{k+1}\cdots i_n}\,\xh_{i_1}\cdots\xh_{i_n}.
\end{equation}
The Kronecker delta fixes position $k$ to carry colour $\mu$, splitting the monomial into a left block $\xh_{i_1}\cdots\xh_{i_{k-1}}$, the middle $\xh_\mu$, and a right block $\xh_{i_{k+1}}\cdots\xh_{i_n}$; since the $\xh$'s do not commute, the blocks retain their positions. Setting $p=k-1\geq0$, $q=n-k\geq0$ and summing over all $n$ lets $p,q$ range independently, so
\begin{equation}\label{eq:Gmu-nc}
  G_\mu = -\Bigl(\sum_{p\geq0}M_{j_1\cdots j_p}\xh_{j_1}\cdots\xh_{j_p}\Bigr) \,\xh_\mu\,\Bigl(\sum_{q\geq0}M_{l_1\cdots l_q}\xh_{l_1}\cdots\xh_{l_q}\Bigr) = -W\,\xh_\mu\,W.
\end{equation}
Substituting into \eqref{eq:W-peel} gives $W = 1 - \sum_\mu W\xh_\mu W\xh_\mu = 1-W\cdot\Phi(W)$.

\emph{Uniqueness.} The right-hand side of \eqref{eq:nc-spectral} is raises degree, the two appended letters in $W\cdot\Phi(W)=\sum_\mu W\xh_\mu W\xh_\mu$ mean its degree-$n$ component $W_{(n)}$ depends only on the components of $W$ of degree $\leq n-2$. Equating homogeneous components of \eqref{eq:nc-spectral} thus determines $W_{(n)}$ from $W_{(0)},\dots,W_{(n-2)}$; by induction on $n$, starting from the constant term $W_{(0)}=1$, the solution is unique.
\end{proof}

We can also recover the Gaussian spectral curve \eqref{eq:semicircle-res-eq} from a process of abelianisation. Abelianising, $\xh_\mu\mapsto x_\mu$ with the $x_\mu$ commuting, the quadratic term collapses, $W\xh_\mu W\xh_\mu\mapsto x_\mu^2 W_{\mathrm{ab}}^2$, so $W\cdot\Phi(W)\mapsto\|\mathbf{x}\|^2W_{\mathrm{ab}}^2$ and \eqref{eq:nc-spectral} becomes the Gaussian spectral curve 
\begin{equation}\label{eq:curve}
  \|\mathbf{x}\|^2\,W_{\mathrm{ab}}^2 + W_{\mathrm{ab}} - 1 = 0,   \qquad \|\mathbf{x}\|^2=\textstyle\sum_\mu x_\mu^2,
\end{equation}
whose branch with $W_{\mathrm{ab}}(\mathbf 0)=1$ is
\begin{equation}
 W_{\mathrm{ab}}(\mathbf{x}) = (\sqrt{1+4\|\mathbf{x}\|^2}-1)/(2\|\mathbf{x}\|^2) = \sum_{n\geq0}(-1)^nC_n\|\mathbf{x}\|^{2n}\,,
\end{equation}
the alternating Catalan generating function ($C_n=\tfrac{1}{n+1}\binom{2n}{n}$). The curve depends only on $\|\mathbf{x}\|^2$ due to the $O(d)$ invariance of the $d$ i.i.d. GUE matrices.

\begin{example}\label{ex:picard}
We compute the first few orders by substituting the graded ansatz $W=\sum_n W_{(n)}$ into \eqref{eq:nc-spectral} and collecting equal degrees (words of the same length). Since $W\cdot\Phi(W)$ raises degree by two,
\begin{equation}
  W_{(n)} = -\sum_{p+q=n-2}\sum_{\mu} W_{(p)}\,\xh_\mu\,W_{(q)}\,\xh_\mu,
\end{equation}
then from $W_{(0)}=1$ each component follows from lower ones:
\begin{align}
  W_{(2)} &= -\sum_\mu \xh_\mu^2, \\
  W_{(4)} &= -\sum_{\mu}\bigl(\xh_\mu\,W_{(2)}\,\xh_\mu        + W_{(2)}\,\xh_\mu\,\xh_\mu\bigr) = \sum_{\mu,\nu}\bigl(\xh_\mu\xh_\nu^2\xh_\mu    + \xh_\nu^2\xh_\mu^2\bigr),
\end{align}
and all odd components vanish. 
\end{example}
The words appearing in $W$ have a combinatorial description, which we discuss further in Section~\ref{subsec:pp}: a \emph{pair partition} of $\{1,\ldots,2n\}$ groups the positions into $n$ disjoint pairs; it is \emph{non-crossing} if no two pairs $(a,b)$, $(c,d)$ interlace as $a<c<b<d$, and it is \emph{colour-matched} for a word if paired positions carry the same letter. As we will see below, $W$ all pairs appearing are non-crossing and colour-matched as can been seen in the example above: the word $\xh_\mu\xh_\nu^2\xh_\mu$ (for $\mu\neq\nu$) admits the colour-matched non-crossing pairing $\{(1,4),(2,3)\}$; the word $\xh_1\xh_2\xh_1\xh_2$ does not appear, since the only colour-matched pairing of $1212$ is $\{(1,3),(2,4)\}$, which crosses.

\subsection{The planar path development equation}\label{sec:CT}

The planar path development equation \eqref{eq:CT-intro} of Cass--Turner \cite{cass2026free} is now an immediate corollary of the product convolution property of proposition \ref{prop:ev-props}:

\begin{theorem}\label{thm:CT}
For any smooth path $\gamma\colon[0,T]\to\mathbb{R}^d$, applying $\ev_\gamma$ to \eqref{eq:nc-spectral} yields
\begin{equation}\label{eq:CT}
  \cW(\gamma;s,t) = 1 - \int_s^t\!\int_s^r
  \cW(\gamma;s,u)\,\cW(\gamma;u,r)\,
  \langle\dd\gamma(u),\dd\gamma(r)\rangle
  = 1 - \cW\star_\gamma\cW.
\end{equation}
\end{theorem}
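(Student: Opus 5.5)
The plan is to apply the evaluation map $\ev_\gamma$ termwise to the fixed-point identity \eqref{eq:nc-spectral} of Theorem~\ref{thm:nc-spectral}, and to read off each side using Theorem~\ref{prop:ev-props} and the inversion formula \eqref{eq:inverse}.

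\emph{Step 1: both sides are admissible.} First I will check that both sides of \eqref{eq:nc-spectral} are series of at most exponential growth, so that $\ev_\gamma$ is defined on them by Proposition~\ref{prop:inversion}. For $W$ this follows from the bound $|M_I|\le 2^{|I|}$. For $W\cdot\Phi(W)=\sum_\mu W\xh_\mu W\xh_\mu$, the coefficient of a word of length $n$ is a sum over at most $n$ splittings of a product of two moments. It is therefore bounded by $n\,2^{n}\le C\lambda^n$ for any $\lambda>2$.

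\emph{Step 2: evaluate the linear terms.} Linearity of $\ev_\gamma$ together with $\ev(1)=1$ gives
\begin{equation*}
\ev_\gamma(W)=1-\ev_\gamma\bigl(W\Phi(W)\bigr).
\end{equation*}
By \eqref{eq:inverse}, the left-hand side equals $\cW(\gamma;s,t)$.

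\emph{Step 3: evaluate the quadratic term.} By definition, $W=\cL_B(\cW)$, and $\cW\in\mathcal S$. Hence
\begin{equation*}
W\Phi(W)=\sum_\mu \cL_\mu(\cW)\,\cL_\mu(\cW).
\end{equation*}
This is exactly the identity relating $\Phi$ to the directional transforms stated after \eqref{eq:def-Phi}. The product--convolution property of Theorem~\ref{prop:ev-props} then gives $\ev_\gamma(W\Phi(W))=\cW\star_\gamma\cW$. Unfolding Definition~\ref{def:path-conv} yields the double integral over $s<u<r<t$ in \eqref{eq:CT}. Here the factor $\cW(\gamma;s,u)$ is the evaluation of the left copy of $W$ on $[s,u]$, and $\cW(\gamma;u,r)$ is the evaluation of the right copy on $[u,r]$.

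\emph{Main obstacle: the infinite sum.} The word identity \eqref{eq:marked} holds for each pair of words $K,L$, but summing it over $K,L$ is an infinite sum. To interchange this sum with the double integral, I will bound
\begin{equation*}
\sum_{K,L}|M_K||M_L|\,|S^K_{s,u}|\,|S^L_{u,r}| \le \Bigl(\sum_K 2^{|K|}\|\gamma\|_1^{|K|}/|K|!\Bigr)^2<\infty,
\end{equation*}
using factorial decay, uniformly in $s\le u\le r\le t$. Since $\gamma$ is smooth, $|\langle\dd\gamma(u),\dd\gamma(r)\rangle|$ is integrable. Dominated convergence (equivalently, Fubini for absolutely convergent series) then justifies evaluating termwise. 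The same bound also justifies regrouping the series for $W\Phi(W)$ by final letter on the evaluation side.
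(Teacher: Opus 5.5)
Your proposal is correct and follows the paper's proof essentially verbatim. You apply $\ev_\gamma$ to \eqref{eq:nc-spectral}, identify $W=\cL_B(\cW)$ with $W\cdot\Phi(W)=\sum_\mu\cL_\mu(\cW)\,\cL_\mu(\cW)$, invoke the product--convolution property, and use $|M_I|\le 2^{|I|}$ to place $\cW$ in $\mathcal S$; your dominated-convergence argument for exchanging the sum over words $K,L$ with the double integral is a useful addition, since the paper passes from basis words to general series only ``by linearity'' without addressing this interchange.
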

\begin{proof}
The coefficients of $W$ have exponential growth, $|M_I|\leq 4^{|I|/2}$ (Section~\ref{sec:setup}), and from \eqref{eq:inverse} $\cW=\ev_\gamma(W)\in\mathcal S$ and $W=\cL_B[\cW]$. The quadratic term of \eqref{eq:nc-spectral} is
\begin{equation}
\sum_\mu W\xh_\mu W\xh_\mu=\sum_\mu\cL_\mu(\cW)\,\cL_\mu(\cW)\,,
\end{equation}
so applying $\ev_\gamma$ and using Proposition~\ref{prop:ev-props}(i) and (iii) with $f=g=\cW$ gives $\cW = 1 - \cW\star_\gamma\cW$.
\end{proof}

\begin{remark}[Uniqueness on path space]\label{rem:pathspace-uniqueness}
Theorem \ref{thm:CT} shows that $\cW$ satisfies \eqref{eq:CT}. The algebraic uniqueness of Theorem \ref{thm:nc-spectral} does not transfer directly, since $\ev_\gamma$ is not injective. Whether \eqref{eq:CT} determines $\cW$ uniquely is a separate, analytic question, established for smooth (and bounded variation) paths by Cass and Turner \cite{cass2026free}. The rough path case requires further analysis, which we do not pursue here.
\end{remark}

\begin{example}\label{ex:d1}
For $d=1$ the letters commute and \eqref{eq:curve} reduces to a single scalar equation. Setting $x=i/z$ and $R(z)=z^{-1}W_{\mathrm{ab}}(i/z)$ gives the algebraic curve \eqref{eq:semicircle-res-eq}, solved by the resolvent \eqref{eq:Gaussian-spectral-curve}, the transform of the Wigner semicircle law $\rho(y)=\tfrac{1}{2\pi}\sqrt{4-y^2}$, $|y|\leq2$. Correspondingly the planar path development equation \eqref{eq:CT} becomes the classical scalar loop equation, and the Wilson line is the contour integral \eqref{eq:contour-intro} of the introduction.
\end{example}

\subsection{The pair-partition solution}\label{subsec:pp}

It is well-known that the ordered moments \eqref{eq:planar-moments} (the free semicircular family) are counted by non-crossing pair partitions \cite{nica2006lectures,anderson2010introduction}. In our present setting, this is rephrased as an explicit combinatorial solution to \eqref{eq:nc-spectral}:

\begin{theorem}\label{thm:nc2}
The unique solution of \eqref{eq:nc-spectral} is
\begin{equation}\label{eq:W-nc-sol}
  W = \sum_{n=0}^{\infty}(-1)^n\sum_{\pi\in\NC(2n)}
  \prod_{(a,b)\in\pi}\delta_{i_a i_b}\;\xh_{i_1}\cdots\xh_{i_{2n}},
\end{equation}
where $\NC(2n)$ is the set of non-crossing pair partitions of $\{1,\ldots,2n\}$. Equivalently, the ordered moment $M_{i_1\cdots i_{2n}}$ \eqref{eq:planar-moments} counts the non-crossing pairings compatible with the colouring. Odd moments are zero for the GUE.
\end{theorem}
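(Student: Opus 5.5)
Since Theorem~\ref{thm:nc-spectral} already gives uniqueness of the solution with constant term $1$, it suffices to show that the right-hand side of \eqref{eq:W-nc-sol}, call it $\widetilde W$, has constant term $1$ and satisfies $\widetilde W = 1-\sum_\mu \widetilde W\,\xh_\mu\,\widetilde W\,\xh_\mu$. Then $\widetilde W=W$, and comparing coefficients of $\xh_I$ gives the statement about $M_{i_1\cdots i_{2n}}$. Odd moments vanish because no pair partition of an odd set exists, so $\widetilde W$ has no odd components. This matches Example~\ref{ex:picard}, where only even degrees are produced from $W_{(0)}=1$.

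The core step is the standard first-return decomposition of non-crossing pair partitions, organised around the \emph{last} position to match the right-appended letter. Take $\pi\in\NC(2n)$ with $n\geq1$ and let $(k,2n)$ be the pair containing $2n$. By the non-crossing condition, no pair joins $\{1,\ldots,k-1\}$ to $\{k+1,\ldots,2n-1\}$. Hence $\pi$ restricts to a non-crossing pairing $\pi_1$ of $\{1,\ldots,k-1\}$ and a non-crossing pairing $\pi_2$ of $\{k+1,\ldots,2n-1\}$. In particular $k-1=2p$ and $2n-1-k=2q$ are even, with $p+q=n-1$.

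Conversely, any such triple $(k,\pi_1,\pi_2)$ reassembles into a unique $\pi\in\NC(2n)$. This gives a bijection
\[
\NC(2n)\;\longleftrightarrow\;\bigsqcup_{p+q=n-1}\NC(2p)\times\NC(2q).
\]
Under this bijection the colour weight factorises:
\[
\prod_{(a,b)\in\pi}\delta_{i_ai_b}=\delta_{i_k i_{2n}}\prod_{\pi_1}\delta\prod_{\pi_2}\delta .
\]
The sign factorises as $(-1)^n=-(-1)^p(-1)^q$.

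I then sum over words. Summing over the shared letter $\mu=i_k=i_{2n}$ and over the free letters in the two blocks turns the degree-$2n$ component into
\[
-\sum_{p+q=n-1}\sum_\mu \widetilde W_{(2p)}\,\xh_\mu\,\widetilde W_{(2q)}\,\xh_\mu .
\]
This is exactly the recursion of Example~\ref{ex:picard}. Summing over $n\geq1$ and adding the constant term $\widetilde W_{(0)}=1$ (the empty partition) gives the fixed-point equation. The noncommutativity causes no trouble, because the left block, the letter $\xh_\mu$, the right block and the final $\xh_\mu$ keep their positions, just as in the proof of Theorem~\ref{thm:nc-spectral}.

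The main point needing care is the bijection itself: that the two restrictions are again non-crossing, and that gluing never creates a crossing with the outer pair $(k,2n)$. Both follow directly from the interlacing definition $a<c<b<d$. I would also check that the coefficient of $\xh_I$ in $\widetilde W$ is a finite sum, namely the number of colour-matched pairings, so the identity holds degree by degree as required in Section~\ref{sec:nc-laplace}.
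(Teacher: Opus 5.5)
Your argument is correct, but it runs in the opposite direction from the paper. The paper gives no separate proof of Theorem~\ref{thm:nc2}. It imports the well-known fact \cite{nica2006lectures,anderson2010introduction} that the planar moments of independent GUE matrices (a free semicircular family) equal $(-1)^n$ times the number of colour-matched non-crossing pairings. These are the genus-zero Wick contractions mentioned in Section~\ref{sec:setup} and reused in the proof of Theorem~\ref{thm:genus-g}. So $W$ is the series \eqref{eq:W-nc-sol} from the outset, and Theorem~\ref{thm:nc-spectral} then identifies it as the unique solution.

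You instead verify directly that the non-crossing series satisfies \eqref{eq:nc-spectral}, via the last-pair decomposition $\NC(2n)\cong\bigsqcup_{p+q=n-1}\NC(2p)\times\NC(2q)$. You then let the uniqueness in Theorem~\ref{thm:nc-spectral} deliver the moment formula and the vanishing of odd moments.

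What your route buys is self-containment within the paper's framework. It derives the non-crossing description from the planar Schwinger--Dyson recursion \eqref{eq:SD} alone, with no appeal to Wick's theorem or to the identification of genus-zero pairings with non-crossing ones. It also makes precise the paper's remark that the classical fact is merely rephrased by the spectral equation. Splitting at the pair containing the last position is exactly the combinatorial counterpart of the right-appended letter $\xh_\mu$ in $W\xh_\mu W\xh_\mu$.

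What the cited Wick/genus route buys is independence from the recursion. It computes the moments exactly at finite $N$, without using the Schwinger--Dyson equation or large-$N$ factorisation. It also extends verbatim to every genus (Theorem~\ref{thm:genus-g}). There the one-trace equation \eqref{eq:nc-spectral-finN} is no longer closed, so a fixed-point verification like yours would first need the multi-trace hierarchy.
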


Evaluating the solution on a path then gives a corresponding closed form combinatorial solution for the Wilson line.
\begin{theorem}\label{thm:complete}
For any smooth path $\gamma\colon[0,T]\to\mathbb{R}^d$, the Wilson line is the evaluation of the solution \eqref{eq:W-nc-sol}, given explicitly by
\begin{equation}\label{eq:W-explicit}
  \cW(\gamma;s,t)  = \ev_\gamma(W) = \sum_{n=0}^{\infty}(-1)^n\sum_{\pi\in\NC(2n)} \int_{\Delta_{2n}[s,t]} \prod_{(a,b)\in\pi}\langle\dd\gamma(t_a),\dd\gamma(t_b)\rangle,
\end{equation}
where $\Delta_{2n}[s,t]=\{s<t_1<\cdots<t_{2n}<t\}$. 
\end{theorem}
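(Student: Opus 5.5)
The plan is to take Theorem~\ref{thm:nc2} as given, so that $W=\cL_B[\cW]$ has the explicit coefficients
\[
M_{i_1\cdots i_{2n}}=(-1)^n\sum_{\pi\in\NC(2n)}\prod_{(a,b)\in\pi}\delta_{i_ai_b},
\]
with odd moments zero. We then apply the inverse transform \eqref{eq:inverse}, $\cW(\gamma;s,t)=\ev_\gamma(W)$. Before doing so we must check that $\ev_\gamma(W)$ is well defined, by Proposition~\ref{prop:inversion}(i). For this it suffices that the coefficients grow at most exponentially. Indeed, for each word the sum has at most $|\NC(2n)|=C_n\leq 4^n$ terms, each of modulus at most one. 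So $|M_I|\leq 2^{|I|}$, matching the bound quoted in Section~\ref{sec:setup}. Combined with the factorial decay $|S^I|\leq\|\gamma\|_1^{|I|}/|I|!$, this gives absolute convergence of $\sum_I M_I S^I_{s,t}(\gamma)$, even after summing over the $d^{2n}$ words of each length. The point of absolute convergence is that it will justify the rearrangement in the next step.

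The main step is a reordering of sums at fixed degree $2n$. By definition,
\[
\ev_\gamma(W)=\sum_n(-1)^n\sum_{i_1,\dots,i_{2n}}\sum_{\pi\in\NC(2n)}\prod_{(a,b)\in\pi}\delta_{i_ai_b}\,S^{i_1\cdots i_{2n}}_{s,t}(\gamma).
\]
The sums over words and over $\pi$ are both finite, so we interchange them. Then, for a fixed $\pi$, we write $S^{i_1\cdots i_{2n}}_{s,t}$ as the simplex integral \eqref{eq:sig-comp} over $\Delta_{2n}[s,t]$ and pull the finite sum over indices inside the integral. For each pair $(a,b)\in\pi$, the constraint $\delta_{i_ai_b}$ identifies the two indices. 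Summing that common index over $\{1,\dots,d\}$ turns $\dd\gamma^{i_a}(t_a)\,\dd\gamma^{i_b}(t_b)$ into $\langle\dd\gamma(t_a),\dd\gamma(t_b)\rangle$. Since $\pi$ is a perfect matching, every position is contracted exactly once. This yields
\[
\int_{\Delta_{2n}[s,t]}\prod_{(a,b)\in\pi}\langle\dd\gamma(t_a),\dd\gamma(t_b)\rangle,
\]
which is the term appearing in \eqref{eq:W-explicit}. For a smooth path this is an honest integral of $\prod\langle\dot\gamma(t_a),\dot\gamma(t_b)\rangle$ over the simplex, so the interchange is elementary linearity of the integral.

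The only delicate point is the outer sum over $n$: the grouping in \eqref{eq:W-explicit} must agree with the word-by-word sum defining $\ev_\gamma$. Absolute convergence handles this. For a fixed $\pi$, the term is bounded by $\|\gamma\|_1^{2n}/(2n)!$, since $\sum_\mu|\dot\gamma^\mu(t_a)\dot\gamma^\mu(t_b)|\leq|\dot\gamma(t_a)||\dot\gamma(t_b)|$, and integrating $\prod_k|\dot\gamma(t_k)|$ over the simplex gives $\|\gamma\|_1^{2n}/(2n)!$. Hence the degree-$2n$ block is at most $C_n\|\gamma\|_1^{2n}/(2n)!$, which is summable in $n$. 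The series \eqref{eq:W-explicit} therefore converges absolutely and equals $\ev_\gamma(W)=\cW(\gamma;s,t)$.

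As a consistency check, which is not needed for the proof, one can verify that this series solves \eqref{eq:CT}. This uses Proposition~\ref{prop:ev-props}(ii) together with the standard first-return decomposition of non-crossing pairings: the block containing position $2n$ pairs with some position $k$, splitting $\pi$ into a pairing of $1,\dots,k-1$ and a pairing of $k+1,\dots,2n-1$.
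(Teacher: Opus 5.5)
Your proposal is correct and follows the paper's own proof. Both apply $\ev_\gamma$ to the pair-partition formula of Theorem~\ref{thm:nc2}, write each $S^{i_1\cdots i_{2n}}_{s,t}$ as a simplex integral, and contract the Kronecker deltas pairwise into $\langle\dd\gamma(t_a),\dd\gamma(t_b)\rangle$. What you add is the absolute-convergence justification for regrouping the sums, via $|M_I|\le 2^{|I|}$ and the $\|\gamma\|_1^{2n}/(2n)!$ bound, which the paper leaves implicit.
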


\begin{proof}
Apply $\ev_\gamma$ to \eqref{eq:W-nc-sol}: the monomial $\xh_{i_1}\cdots\xh_{i_{2n}}$ maps to $S^{i_1\cdots i_{2n}}_{s,t}(\gamma)$, and, using \eqref{eq:sig-comp}, the deltas contract the $2n$ colour indices into $n$ matched pairs, each giving $\langle\dd\gamma(t_a),\dd\gamma(t_b)\rangle$. 
\end{proof}

For completeness, we recover the well-known (see, for example \cite{cass2026free}) result for the $d=1$ planar Wilson line:

\begin{corollary}[Straight-line paths]\label{cor:bessel}
For $\gamma(t)=tv$ with $v\in\mathbb{R}^d$ on $[0,\tau]$,
\begin{equation}\label{eq:bessel}
  \cW(\gamma;0,\tau) = \sum_{n=0}^{\infty}(-1)^n C_n \frac{(\tau\|v\|)^{2n}}{(2n)!} = \frac{J_1(2\tau\|v\|)}{\tau\|v\|},
\end{equation}
with $J_1$ the Bessel function of the first kind.
\end{corollary}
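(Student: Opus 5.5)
Start from Theorem~\ref{thm:complete} and specialise to $\gamma(t)=tv$ on $[0,\tau]$. Here $\dd\gamma(t)=v\,\dd t$, so every pairing factor is the same constant: $\langle\dd\gamma(t_a),\dd\gamma(t_b)\rangle=\|v\|^2\,\dd t_a\,\dd t_b$. For a fixed $\pi\in\NC(2n)$ the integrand is therefore $\|v\|^{2n}\,\dd t_1\cdots\dd t_{2n}$, independent of $\pi$. Its integral over $\Delta_{2n}[0,\tau]$ is $\|v\|^{2n}\tau^{2n}/(2n)!$, the volume of the ordered simplex. Summing over $\pi$ then just counts $|\NC(2n)|=C_n$, which gives the first equality in \eqref{eq:bessel}:
\[
\cW(\gamma;0,\tau)=\sum_{n\ge0}(-1)^nC_n\frac{(\tau\|v\|)^{2n}}{(2n)!}.
\]
Equivalently, this is the abelianised curve \eqref{eq:curve}: on a straight line $S^I=\prod_k v^{i_k}\,\tau^{|I|}/|I|!$, so evaluating $W$ amounts to substituting $x_\mu=\tau v^\mu$ into $W_{\mathrm{ab}}$ and dividing the degree-$2n$ term by $(2n)!$. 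This is the Borel summation of Example~\ref{ex:classical}. Absolute convergence follows from $C_n\le4^n$ against $(2n)!$, consistent with Proposition~\ref{prop:inversion}.

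To identify the series with the Bessel function, set $x=\tau\|v\|$ and simplify the coefficients:
\[
\frac{C_n}{(2n)!}=\frac{1}{(n+1)!\,n!}.
\]
The series becomes $\sum_n(-1)^n x^{2n}/(n!(n+1)!)$. Compare this with the standard expansion
\[
J_1(y)=\sum_n\frac{(-1)^n}{n!(n+1)!}\Bigl(\frac y2\Bigr)^{2n+1}.
\]
Taking $y=2x$ gives $J_1(2x)=\sum_n(-1)^nx^{2n+1}/(n!(n+1)!)$, so $J_1(2x)/x$ is exactly our series. The value at $x=0$ is $1$, which matches $\cW=1$ at $\tau=0$.

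As a cross-check, the same result comes from the contour formula \eqref{eq:contour-intro}: the moments of the semicircle are $\int y^{2n}\rho=C_n$, and its Fourier transform is the known $J_1(2x)/x$.

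There is no real obstacle here. The only step needing care is the reduction that makes the integrand $\pi$-independent, so that the sum over $\NC(2n)$ becomes the count $|\NC(2n)|=C_n$. That count is standard, and it also follows from the abelianised curve \eqref{eq:curve}, since $W_{\mathrm{ab}}=\sum_n(-1)^nC_n\|\mathbf x\|^{2n}$.
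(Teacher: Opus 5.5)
Your proposal is correct and follows essentially the same route as the paper: substitute the straight-line signature into the pair-partition formula of Theorem~\ref{thm:complete}, observe that each pair contributes $\|v\|^2$ and the simplex contributes $\tau^{2n}/(2n)!$, count $|\NC(2n)|=C_n$, and match the result with the series for $J_1(2x)/x$. Your simplification $C_n/(2n)!=1/(n!\,(n+1)!)$ spells out the Bessel identification that the paper only asserts, and the abelianised-curve remark is a valid cross-check.
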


\begin{proof}
The signature of $\gamma(t)=tv$ is $S^{i_1\cdots i_n}_{0,\tau}=\tau^n v^{i_1}\cdots v^{i_n}/n!$. Substituting into \eqref{eq:W-explicit}, each pair contributes $\|v\|^2$ and the integral gives $\tau^{2n}/(2n)!$, so $\cW=\sum_n(-1)^nC_n\|v\|^{2n}\tau^{2n}/(2n)!$. Finally, we note $J_1(2x)/x
=\sum_n(-1)^nC_n x^{2n}/(2n)!$.
\end{proof}

\begin{example}[$d=2$: ordered vs. symmetrised moments]\label{ex:d2}
For $d=2$ the ordered moments see information invisible to the commutative curve. For example, at degree $4$, the two non-crossing pairings of $\{1,2,3,4\}$ are $\pi_1=\{(1,2),(3,4)\}$ and $\pi_2=\{(1,4),(2,3)\}$, giving $M_{1122}=M_{1221}=1$ but $M_{1212}=0$. The degree-$4$ Wilson line is
\begin{equation}
  \cW(\gamma)\big|_{\mathrm{deg}\,4}
  = \int_{\Delta_4}\bigl[
  \langle\dd\gamma(t_1),\dd\gamma(t_2)\rangle\langle\dd\gamma(t_3),\dd\gamma(t_4)\rangle
  + \langle\dd\gamma(t_1),\dd\gamma(t_4)\rangle\langle\dd\gamma(t_2),\dd\gamma(t_3)\rangle\bigr];
\end{equation}
for a straight line the two terms in the integral coincide, but for a curved path they differ.
\end{example}

While \eqref{eq:W-explicit} can also be obtained by computing the planar moments directly \cite{nica2006lectures,cass2026free} and contracting against the signature, the spectral equation places the Wilson line inside a family of geometric objects---spectral curve, resolvent, topological recursion---which yields closed-form solutions, systematic $1/N$ corrections (as we see in the following Section \ref{sec:genus}), and a framework for interacting path developments (Section \ref{sec:discussion}).

\section{$1/N$ corrections}\label{sec:genus}

We now compute the $1/N$ corrections to the randomised path development, by studying corrections to the planar path development equation \eqref{eq:CT}. This planar large-$N$ limit equation arose from \eqref{eq:SD} -- the Schwinger--Dyson equation where the connected two-trace term was dropped. Keeping the subleading term in the same matrix model integration by parts (Section~\ref{sec:spectral}) gives the exact finite-$N$ identity (see, for example, \cite{migdal1983loop,eynard2016counting,guionnet2007second,guionnet2019asymptotics})
\begin{equation}\label{eq:SD-full}
  M^N_{I\,\mu} = -\sum_{I=K\mu L}\Bigl[M^N_K M^N_L   + \frac{1}{N^2}M^{\mathrm{conn}}_{K;\,L}\Bigr],
\end{equation}
where the first term is given by the finite-$N$ moment analogue of \eqref{eq:planar-moments}:
\begin{equation}
M^N_I = \frac{1}{N}\,i^{|I|}\langle\tr A_I\rangle
\end{equation}
and the second term is the connected two-trace correlator
\begin{equation}\label{eq:conn-def}
  M^{\mathrm{conn}}_{K;L} := i^{|K|+|L|}\Bigl[ \bigl\langle\tr(A_K)\,\tr(A_L)\bigr\rangle - \bigl\langle\tr(A_K)\bigr\rangle\bigl\langle\tr(A_L)\bigr\rangle\Bigr]\,.
\end{equation}

\subsection{The genus $g$ Wilson line}\label{sec:general-d}

Using the Schwinger-Dyson equations \eqref{eq:SD-full}, the finite $N$ Wilson line can then be expanded in inverse powers of $N$ as a sum over pair partitions on higher genus surfaces. 

The genus of a pair partition, which enters the closed-form sum below, is defined as follows (see for example \cite{anderson2010introduction, mingo2017free}).

\begin{definition}[Genus of a pair partition]\label{def:genus-pp}
A pair partition $\pi\in\Pair(2n)$ may be associated to a fixed-point-free involution $\sigma_\pi\in S_{2n}$. Writing the complete cycle as $\tau=(1\,2\,\cdots\,2n)$, the genus of $\pi$ is defined by
\begin{equation}\label{eq:genus-def}
    g(\pi) := \frac{n+1-c(\sigma_\pi\tau)}{2},
\end{equation}
where $c(\cdot)$ is the number of cycles. Equivalently $g(\pi)$ is the minimal genus of a surface with the chord diagram without crossings. The genus-0 pairings are non-crossing.
\end{definition}

\begin{theorem}[Genus-$g$ Wilson line]\label{thm:genus-g}
For $d\geq1$ and any smooth $\gamma$,
\begin{equation}\label{eq:genus-g}
  \cW_g(\gamma;s,t) = \sum_{n}(-1)^n \sum_{\substack{\pi\in\Pair(2n)\\g(\pi)=g}} \int_{\Delta_{2n}[s,t]} \prod_{(a,b)\in\pi}\langle\dd\gamma(t_a),\dd\gamma(t_b)\rangle,
\end{equation}
and for a matrix $U(\gamma;s,t)$ solution of \eqref{eq:dev} we have the following $1/N$ genus expansion for the path development:
\begin{equation}\label{eq:1overN-path-dev}
\cW(\gamma;s,t)=\frac{1}{N}\langle\tr U(\gamma)\rangle = \sum_{g\geq0}N^{-2g}\cW_g(\gamma)\,.
\end{equation}
\end{theorem}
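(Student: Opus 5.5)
The plan is to prove the statement directly from the finite-$N$ signature expansion, rather than through the Schwinger--Dyson recursion \eqref{eq:SD-full}. The starting point is the exact identity
\[
\tfrac{1}{N}\langle\tr U(\gamma;s,t)\rangle=\sum_I S^I_{s,t}(\gamma)\,M^N_I ,
\]
which holds at finite $N$ by \eqref{eq:U-expand} together with linearity of expectation and trace. Exchanging expectation and the infinite sum needs justification. Since $U$ is unitary, the partial sums are controlled by the factorial decay of $S^I$ combined with moment bounds of the form $\langle\|A_\mu\|^{n}\rangle\le C^n\, n^{n/2}$ for Gaussian entries. This is summable against $\|\gamma\|_1^n/n!$, so dominated convergence applies.

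The central step is the classical genus expansion of GUE mixed moments, proved by Wick's theorem. With covariance $\langle (A_\mu)_{ij}(A_\nu)_{kl}\rangle=\frac1N\delta_{\mu\nu}\delta_{il}\delta_{jk}$, one gets
\[
\tfrac1N\langle\tr A_{i_1}\cdots A_{i_{2n}}\rangle=\sum_{\pi\in\Pair(2n)}\prod_{(a,b)\in\pi}\delta_{i_ai_b}\;N^{-1-n}N^{c(\sigma_\pi\tau)} .
\]
The power of $N$ arises because summing the matrix indices around the pairing yields one free index sum per cycle of $\sigma_\pi\tau$, which is the standard face count of the ribbon graph. By Definition~\ref{def:genus-pp}, $c(\sigma_\pi\tau)=n+1-2g(\pi)$, so each pairing contributes $N^{-2g(\pi)}$. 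Odd moments vanish. Including the factor $i^{2n}=(-1)^n$ from \eqref{eq:planar-moments}, and noting that the sum is finite for each fixed word, this gives
\[
M^N_I=\sum_{g}N^{-2g}\,(-1)^n\#\{\pi\in\Pair(2n):g(\pi)=g,\ \pi\text{ colour-matched for }I\}.
\]
I would take this from \cite{anderson2010introduction,mingo2017free} and state only the index-cycle bookkeeping explicitly.

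Next, substitute this into the signature sum and contract, exactly as in the proof of Theorem~\ref{thm:complete}. The term $\sum_I S^I\prod_{(a,b)\in\pi}\delta_{i_ai_b}$ equals $\int_{\Delta_{2n}[s,t]}\prod_{(a,b)\in\pi}\langle\dd\gamma(t_a),\dd\gamma(t_b)\rangle$ by \eqref{eq:sig-comp}. Regrouping the terms by genus then produces \eqref{eq:genus-g} and \eqref{eq:1overN-path-dev}. The case $g=0$ is consistent with Theorem~\ref{thm:complete}, because genus-0 pairings are exactly the non-crossing ones.

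The main obstacle is the legitimacy of this reordering, since the statement is really an identity of series in $N^{-2}$ with path-functional coefficients. My approach is a crude bound. Each pairing integral is at most $\|\gamma\|_1^{2n}/(2n)!$, and $|\Pair(2n)|=(2n-1)!!$, so
\[
\sum_n\sum_\pi\bigl|\text{term}\bigr|\,N^{-2g(\pi)}\le\sum_n \frac{(2n-1)!!}{(2n)!}\|\gamma\|_1^{2n}=\sum_n\frac{\|\gamma\|_1^{2n}}{2^n n!}=e^{\|\gamma\|_1^2/2}<\infty,
\]
uniformly in $N\ge1$. This gives absolute convergence of the double series over $(n,\pi)$, which justifies rearranging into the genus sum. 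It also shows that each $\cW_g$ converges, with $|\cW_g|\le e^{\|\gamma\|_1^2/2}$, and that the expansion \eqref{eq:1overN-path-dev} converges for all $N\ge1$, not merely asymptotically.
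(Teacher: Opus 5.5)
Your proposal is correct and follows the paper's own proof: the exact finite-$N$ Wick/genus expansion $\frac1N\langle\tr A_{i_1}\cdots A_{i_{2n}}\rangle=\sum_{\pi}N^{-2g(\pi)}\prod_{(a,b)\in\pi}\delta_{i_ai_b}$ is collected by genus and contracted against the signature, which the paper phrases as forming $W_g$ and taking $\cW_g=\ev_\gamma(W_g)$. What you add, and the paper omits, is the analytic justification: dominated convergence for exchanging expectation with the signature series at fixed $N$ (driven by Gaussian tails of $\|A_\mu\|$ rather than by unitarity), and the bound $\sum_n (2n-1)!!\,\|\gamma\|_1^{2n}/(2n)! = e^{\|\gamma\|_1^2/2}$, which legitimises regrouping by genus and shows that the $N^{-2}$ series actually converges for each fixed path.
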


\begin{proof}
For GUE matrices, Wick's theorem gives the moment exactly at finite $N$,
\begin{equation}
  \frac{1}{N}\bigl\langle\tr(A_{i_1}\cdots A_{i_{2n}})\bigr\rangle
  = \sum_{\pi\in\Pair(2n)} N^{-2g(\pi)}\prod_{(a,b)\in\pi}\delta_{i_a i_b},
\end{equation}
where the sum is over all colour-matched pair partitions, and $g(\pi)$ is the genus of Definition~\ref{def:genus-pp} \cite{t1993planar,eynard2016counting}. Collecting the $N^{-2g}$ coefficient gives the genus-$g$ moment 
\begin{equation}
M_{i_1\cdots i_{2n};g}=(-1)^n\sum_{\{\pi:g(\pi)=g\}}\prod\delta_{i_a i_b}\,,
\end{equation}
and the genus $g$ non-commutative directional resolvent (moment generating function) is obtained by summing over words:
\begin{equation}
    W_g=\sum_{I} M_{I;g}\xh_I
\end{equation}
Evaluating $\cW_g=\ev_\gamma(W_g)$ produces \eqref{eq:genus-g} and summing over all genus gives \eqref{eq:1overN-path-dev}. The $g=0$ term reproduces the free semicircular moments of Theorem~\ref{thm:nc2}.
\end{proof}

\subsection{The genus-\texorpdfstring{$g$}{g} loop equation on path space}\label{sec:genus-loop}

We now show that the pair-partition sum \eqref{eq:genus-g} is the closed-form combinatorial solution of a recursion on path space that generalises the planar path development equation of theorem \ref{thm:CT}.

Start from the finite-$N$ generating function
\begin{equation}
W=\sum_I M^N_I\,\xh_I\,.
\end{equation}
Contracting the finite $N$ Schwinger--Dyson identity \eqref{eq:SD-full} with $\xh_{i_1}\cdots\xh_{i_n}$ and reassembling the blocks exactly as in the proof of Theorem~\ref{thm:nc-spectral}---now retaining the connected term---gives 
\begin{equation}\label{eq:nc-spectral-finN}
  W = 1 - W\cdot\Phi(W) - \frac{1}{N^{2}}\,\Delta\,, \qquad \Delta = \sum_{\mu}\sum_{K,L} M^{\mathrm{conn}}_{K;\,L}\;
  \xh_K\,\xh_\mu\,\xh_L\,\xh_\mu.
\end{equation}
Expanding \eqref{eq:nc-spectral-finN} in genus, with
\begin{equation}\label{eq:genus-exp-W-Delta}
W=\sum_g N^{-2g}W_g\,, \quad \Delta=\sum_g N^{-2g}\Delta_g
\end{equation}
where $\Delta_g$ is the genus-$g$ part of the two-trace source, gives a hierarchy of equations for the corrections $\cW_g=\ev_\gamma(W_g)$ in \eqref{eq:1overN-path-dev}, which we now show are linear on path space.

\begin{proposition}[Genus-$g$ loop equation]\label{thm:genus-g-loop}
For each $g\geq1$, the genus-$g$ correction $\cW_g$ in \eqref{eq:genus-g} satisfies the \emph{linear} Volterra equation
\begin{equation}\label{eq:genus-g-loop}
  \cW_g = L_0\,\cW_g + \sigma_g,
  \qquad
  \sigma_g = -\sum_{\substack{g_1+g_2=g\\g_1,g_2\geq1}}\cW_{g_1}\star_\gamma\cW_{g_2} - \delta_g,
\end{equation}
where 
\begin{equation}\label{eq:lin-convolution}
L_0 f := -(\cW_0\star_\gamma f + f\star_\gamma\cW_0)
\end{equation}
is the linearisation of the path convolution about the planar solution and $\delta_g=\ev_\gamma(\Delta_{g-1})$.
\end{proposition}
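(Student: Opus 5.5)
The plan is to expand the finite-$N$ spectral equation \eqref{eq:nc-spectral-finN} order by order in $N^{-2}$ in the free algebra, and then push the resulting algebraic identity to path space with $\ev_\gamma$, exactly as in Theorem~\ref{thm:CT}.

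\emph{Step 1: extract the genus-$g$ component.} Substitute the expansions \eqref{eq:genus-exp-W-Delta} into \eqref{eq:nc-spectral-finN}. Since $\Phi$ is linear, $W\cdot\Phi(W)=\sum_{g_1,g_2}N^{-2(g_1+g_2)}W_{g_1}\Phi(W_{g_2})$. The source term $N^{-2}\Delta$ contributes $\Delta_{g-1}$ at order $N^{-2g}$. Comparing coefficients of $N^{-2g}$ degree by degree is legitimate because each homogeneous component is a polynomial in $N^{-2}$: by the Wick formula in the proof of Theorem~\ref{thm:genus-g}, $M^N_I$ is a finite sum of powers $N^{-2g(\pi)}$, and similarly for $M^{\mathrm{conn}}$. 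For $g\ge1$ this yields
\[
W_g=-W_0\Phi(W_g)-W_g\Phi(W_0)-\sum_{\substack{g_1+g_2=g\\ g_1,g_2\ge1}}W_{g_1}\Phi(W_{g_2})-\Delta_{g-1}.
\]
The constant term $1$ appears only at $g=0$, and the $g=0$ equation reproduces Theorem~\ref{thm:nc-spectral}.

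\emph{Step 2: evaluate on the path.} Write $W_g\Phi(W_h)=\sum_\mu \cL_\mu(\cW_g)\cL_\mu(\cW_h)$, using $W_g=\cL_B[\cW_g]$. Applying $\ev_\gamma$ and the product--convolution property of Proposition~\ref{prop:ev-props} turns each such term into $\cW_g\star_\gamma\cW_h$. The terms with $g_1=0$ or $g_2=0$ give $-(\cW_0\star_\gamma\cW_g+\cW_g\star_\gamma\cW_0)=L_0\cW_g$, the mixed terms give the convolution part of $\sigma_g$, and $\ev_\gamma(\Delta_{g-1})=\delta_g$. Linearity in $\cW_g$ is immediate, because $\cW_g$ appears only paired with $\cW_0$ while all other terms involve strictly lower genera. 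The equation is of Volterra type, since $\star_\gamma$ only integrates over $s<u<r<t$.

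\emph{Step 3: justify convergence.} I expect this to be the main obstacle. We need $\cW_g\in\mathcal S$ and $\ev_\gamma(\Delta_{g-1})$ absolutely convergent, so that $\ev_\gamma$ can be applied termwise and Proposition~\ref{prop:ev-props} used, as in the proof of Theorem~\ref{thm:CT}.

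For $W_g$, I would bound the coefficients by the total number of pair partitions, $|M_{I;g}|\le (2n-1)!!$ for $|I|=2n$. This is not exponential, so the bound must be sharpened. The known count of genus-$g$ pairings grows like $C_g\,n^{3g}4^n$ (Harer--Zagier), which gives exponential growth $\lambda^{|I|}$ for fixed $g$.

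For $\Delta_{g-1}$, the genus-$(g-1)$ part of the connected two-trace correlator is similarly a sum over connected genus-$(g-1)$ pairings of two cycles. Its count is polynomial times exponential, so again $|c^I|\le C\lambda^{|I|}$.

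With these bounds, Proposition~\ref{prop:inversion} applies and the formal identity of Step 1 evaluates to \eqref{eq:genus-g-loop}. If a sharp count is unavailable, a fallback is to check the identity degree by degree and use the factorial decay $\|\gamma\|_1^n/n!$ against the cruder $(2n-1)!!$ bound. That suffices only for short paths, after which the result can be extended by Chen's identity and concatenation.
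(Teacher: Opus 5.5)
Your Steps~1--2 are exactly the paper's proof: collect the $N^{-2g}$ coefficient of \eqref{eq:nc-spectral-finN}, split off the $g_1=0$ and $g_2=0$ terms, write $W_{g_1}\Phi(W_{g_2})=\sum_\mu\cL_\mu(\cW_{g_1})\,\cL_\mu(\cW_{g_2})$, and apply $\ev_\gamma$ with Theorem~\ref{prop:ev-props}. Your remark that each homogeneous component is a polynomial in $N^{-2}$, and your Step~3, supply justifications the paper leaves implicit, and the Harer--Zagier route is sound. Your fallback is in fact stronger than you say: $(2n-1)!!\,L^{2n}/(2n)!=L^{2n}/(2^n n!)$ is summable for every path of finite length $L$, after which you rerun the proof of Theorem~\ref{prop:ev-props} by dominated convergence. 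So no short-path restriction is needed, which is fortunate, since a Chen-identity extension would not work for the non-multiplicative functional $\cW_g$.
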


\begin{proof}
Collect $O(N^{-2g})$ in \eqref{eq:nc-spectral-finN} using the expansions \eqref{eq:genus-exp-W-Delta}. Since the operator $Phi$ in \eqref{eq:def-Phi} is linear,
\begin{equation}
W\cdot\Phi(W)=\sum_g N^{-2g}\sum_{g_1+g_2=g}W_{g_1}\cdot\Phi(W_{g_2})\,,
\end{equation}
and the $O(N^{-2g})$ of \eqref{eq:nc-spectral-finN} becomes $W_g=-\sum_{g_1+g_2=g}W_{g_1}\cdot\Phi(W_{g_2})-\Delta_{g-1}$. The only terms containing $W_g$ on the r.h.s. are $g_1=0$ and $g_2=0$, each linear in it, so 
\begin{equation}
W_g = -W_0\cdot\Phi(W_g) - W_g\cdot\Phi(W_0)+\Sigma_g\,,\quad \Sigma_g=-\sum_{g_1,g_2\geq1}W_{g_1}\cdot\Phi(W_{g_2})-\Delta_{g-1}\,.
\end{equation}
From the definition of $W$ \eqref{eq:R-nc} and $\Phi$ \eqref{eq:def-Phi}:
\begin{equation}
   W_{g_1}\cdot\Phi(W_{g_2})= \cL_B(\cW_{g_1})\cdot\Phi\bigl(\cL_B(\cW_{g_2})\bigr)
  = \sum_{\mu=1}^d \cL_\mu(\cW_{g_1})\,\cL_\mu(\cW_{g_2})\,.
\end{equation}
and now applying $\ev_\gamma$ and using Theorem~\ref{prop:ev-props} by linearity we have
\begin{equation}
\ev_\gamma\left(W_0\cdot\Phi(W_g)+W_g\cdot\Phi(W_0)\right)=-L_0\cW_g\,,\quad \ev_\gamma(\Sigma_g)=\sigma_g\,,
\end{equation}
finally giving \eqref{eq:genus-g-loop}.
\end{proof}
The sources of equations \eqref{eq:genus-g-loop} remain to be determined, by going deeper in the Schwinger--Dyson hierarchy. We now pursue this for genus one. 

\subsubsection{The genus one system}\label{ex:g01}

We focus on the genus one correction in particular. At $g=0$, Theorem \ref{thm:genus-g-loop} is the planar equation \eqref{eq:CT}:
\begin{equation}\label{eq:W0-int}
  \cW_0(s,t) = 1 - \int_s^t\!\!\int_s^r \cW_0(s,u)\,\cW_0(u,r)\,\langle\mathrm d\gamma(u),\mathrm d\gamma(r)\rangle,
  \qquad \cW_0(s,s)=1.
\end{equation}
For $g\ge 1$ we introduce new notation for the insertion of multiple traces of developments into the Wilson loop, corresponding to the  connected $n$-point functions of the corresponding sub-developments when the path is split into $n$ parts. We write $\cW_{g,n}$ for the genus-$g$, $n$-insertion correlator, by analogy with the typical topological recursion notation. We abbreviate $\cW_g:=\cW_{g,1}$ for the single insertion Wilson line. In particular, the planar connected two-point function of two sub-developments is given explicitly by
\begin{equation}\label{eq:W2-def}
  \cW_{0,2}(a,b;c,d) = \lim_{N\to\infty} N^2\Bigl(\bigl\langle\tfrac1N\tr U_{a,b}\,\tfrac1N\tr U_{c,d}\bigr\rangle - \bigl\langle\tfrac1N\tr U_{a,b}\bigr\rangle\bigl\langle\tfrac1N\tr U_{c,d}\bigr\rangle\Bigr),
\end{equation}
with $U_{a,b}$ being the development of the restriction $\gamma(t)|_{t\in[a,b]}$.

At $g=1$ the equation \eqref{eq:genus-g-loop} is linear in $\cW_1$,
\begin{equation}\label{eq:W1-int}
\begin{split}
  \cW_1(s,t) =& - \int_s^t \int_s^r \Bigl(\cW_1(s,u)\,\cW_0(u,r) + \cW_0(s,u)\,\cW_1(u,r) + \cW_{0,2}(s,u;u,r)\Bigr)\,\langle\mathrm d\gamma(u),\mathrm d\gamma(r)\rangle\,;\\
 \cW_1(s,s)=& 0,
\end{split}
\end{equation}
where the first two terms on the right are $L_0\cW_1$, the linearisation \eqref{eq:lin-convolution} of the quadratic term of \eqref{eq:W0-int} about the planar solution, and the source $\cW_{0,2}$ is the planar connected two-point function of two sub-developments given in \eqref{eq:W2-def}. This is not an independent unknown, in fact it satisfies its own linear loop equation, driven by $\cW_0$. Applying the two-trace Schwinger--Dyson identity (see \ref{lem:two-trace}) and evaluating against the signature (details in appendix \ref{app:closed}) gives
\begin{equation}\label{eq:W2-loop-main}
\begin{split}
  \cW_{0,2}(a,b;c,d) = {}& - \int_a^b \int_a^r \Bigl(\cW_0(a,u)\,\cW_{0,2}(u,r;c,d) + \cW_{0,2}(a,u;c,d)\,\cW_0(u,r)\Bigr)\,\langle\mathrm d\gamma(u),\mathrm d\gamma(r)\rangle \\
  & - \int_a^b \int_c^d \cW_0\bigl(\gamma|_{[a,u]}*\gamma|_{[v,d]}*\gamma|_{[c,v]}\bigr)\,\langle\mathrm d\gamma(u),\mathrm d\gamma(v)\rangle.
\end{split}
\end{equation}
Here we have used the notation $\gamma|_{[a,b]}$ to denote $\gamma(t)|_{t\in[a,b]}$. We have obtained a closed system of integral equations at genus one: \eqref{eq:W0-int} fixes $\cW_0$; \eqref{eq:W2-loop-main}, sourced by $\cW_0$, fixes $\cW_{0,2}$; and \eqref{eq:W1-int}, sourced by $\cW_{0,2}$, in turn determines $\cW_1$. We solve these equations using boundary data $\cW_0(a,a)=1$ and $\cW_{0,2}(a,a;c,d)=\cW_1(a,a)=0$.

In the following section \ref{sec:numerics} we solve the system by numerical discretisation for a number of example paths finding good agreement with the Monte Carlo method and with the closed-form combinatorial formula of theorem \ref{thm:genus-g}.

\begin{remark}\label{rem:W2-closed}
The machinery of topological recursion closes the hierarchy at higher genus more generally. The genus-$g$ source $\delta_g$ in \eqref{eq:genus-g-loop} is built from connected multi-trace correlators, each of which satisfies its own loop equation, obtained by applying the Schwinger--Dyson identity to a product of several traces and evaluating against the signature. For $d=1$ this is the standard Eynard--Orantin topological recursion \cite{eynard2007invariants,eynard2016counting} on the spectral curve \eqref{eq:semicircle-res-eq}, generating all correlators from the Bergman kernel. For $d\geq2$ the correlators are ordered words, and closing the system for all $g$ requires a non-commutative lift, producing these ordered correlators directly in the tensor algebra. We do not pursue the general approach here, focussing on the genus one corrections which already capture the main ideas of topological recursion.
\end{remark}

\subsection{Numerical validation}\label{sec:numerics}

We now verify proposition \ref{thm:genus-g-loop} and the system of loop equations that follow in the genus one case by three independent methods. We solve the discretised loop equations, evaluate the combinatorial closed form of Theorem~\ref{thm:genus-g} directly, and Monte Carlo sample the finite-$N$ development. As examples we take three paths:
\begin{itemize}
\item a $d=2$ quarter-circle $\gamma(t)=s(\cos\tfrac{\pi t}{2},\sin\tfrac{\pi t}{2})$, $s=1.0$;
\item a $d=3$ twisted cubic $\gamma(t)=a(t,t^2,t^3)$, $a=0.75$;
\item a $d=2$ closed circle $\gamma(t)=r(\cos 2\pi t,\sin 2\pi t)$, $r=0.5$: a Wilson loop.
\end{itemize}

\paragraph{Discrete loop equations.}
Let $\sigma=(\sigma_1,\dots,\sigma_r)$ be an ordered sequence of grid
steps, with increments $\Delta\gamma_k=\gamma(t_k)-\gamma(t_{k-1})$. We write
\begin{equation}
\sigma'=(\sigma_1,\dots,\sigma_{r-1}),\qquad
\sigma_{<\ell}=(\sigma_1,\dots,\sigma_{\ell-1}),\qquad
\sigma_{>\ell}=(\sigma_{\ell+1},\dots,\sigma_{r-1}),
\end{equation}
The explicit Euler discretisation of the closed genus one system is
\begin{align}
  K_0[\sigma]&= K_0[\sigma'] - \sum_{\ell=1}^{r-1}\bigl\langle\Delta\gamma_{\sigma_\ell}, \Delta\gamma_{\sigma_r}\bigr\rangle K_0[\sigma_{<\ell}]K_0[\sigma_{>\ell}],
  \label{eq:euler-ct}\\
  c_2[\sigma;\rho] &= c_2[\sigma';\rho] - \sum_{\ell=1}^{r-1} \bigl\langle\Delta\gamma_{\sigma_\ell}, \Delta\gamma_{\sigma_r}\bigr\rangle \Bigl(K_0[\sigma_{<\ell}]c_2[\sigma_{>\ell};\rho]+c_2[\sigma_{<\ell};\rho]K_0[\sigma_{>\ell}]\Bigr)\\
  &\qquad - \sum_{m=1}^{|\rho|} \bigl\langle\Delta\gamma_{\rho_m}, \Delta\gamma_{\sigma_r}\bigr\rangle K_0[\sigma'*\rho_{\mathrm{rot}(m)}],
  \label{eq:c2-disc}\\
  K_1[\sigma] &= K_1[\sigma'] - \sum_{\ell=1}^{r-1} \bigl\langle\Delta\gamma_{\sigma_\ell},\Delta\gamma_{\sigma_r}\bigr\rangle \Bigl(
    K_1[\sigma_{<\ell}]K_0[\sigma_{>\ell}] +K_0[\sigma_{<\ell}]K_1[\sigma_{>\ell}] +c_2[\sigma_{<\ell};\sigma_{>\ell}]\Bigr).
  \label{eq:K1-disc}
\end{align}
where $K_0[\varnothing]=1$, $c_2[\varnothing;\rho]=0$ and $K_1[\varnothing]=0$, and
\begin{equation}
\rho_{\mathrm{rot}(m)} =(\rho_{m+1},\dots,\rho_{|\rho|}, \rho_1,\dots,\rho_{m-1}).
\end{equation}
The restriction $\ell<r$ is the discrete counterpart of the  time ordering $u<r$ in the integrals. The inner product pairs
the increments assigned to the $\ell$-th and final grid cells. The last term of \eqref{eq:c2-disc} discretises the source \eqref{eq:beta0} contracting the final step of the first boundary with the $m$-th step of the second boundary into the single concatenated path represented by $\sigma'*\rho_{\mathrm{rot}(m)}$.

Equations \eqref{eq:euler-ct}, \eqref{eq:c2-disc}, and
\eqref{eq:K1-disc} are solved successively by forward recurrence and we thereby obtain numerical approximations to $\mathcal{W}_0$ and $\mathcal{W}_1$ (via $\mathcal{W}_{0,2}$).

\paragraph{Combinatorial closed form.}
We also evaluate the pair-partition solution of Theorem \ref{thm:genus-g} directly, computing the exact signature of the piecewise-linear path by Chen's identity (see section \ref{sec:path-sig}) and contracting it against the colour-matched pairings of each genus,
\begin{equation}\label{eq:W1-comb}
  \cW_g = \sum_{n}(-1)^n \sum_{\substack{\pi\in\Pair(2n)\\ g(\pi)=g}}\ \sum_{i_1,\dots,i_{2n}} S^{i_1\cdots i_{2n}}(\gamma)\prod_{(a,b)\in\pi}\delta_{i_a i_b},
\end{equation}
truncated in the examples such that the last retained terms are of order $10^{-4}$ or below. This is the most expensive method numerically due to the combinatorial increase in pair partitions.

\paragraph{Monte Carlo.}
Sampling $d$ independent $N\times N$ GUE matrices with entries of variance $1/N$ (the law of Section~\ref{sec:setup}), we form the exact development of the piecewise-linear path,
\begin{equation}\label{eq:euler-prod}
  U = \prod_{k=0}^{n_{\mathrm{grid}}-1}\exp \bigl(iA_\mu\Delta\gamma^\mu_k\bigr),
\end{equation}
at $n_{\mathrm{grid}}=48$, and average $\tfrac{1}{N}\mathrm{Re}\,\tr U$ over $0.6$--$2.4\times10^5$ realisations for each $N=2,\ldots,8$.

\paragraph{Results.}
We verify the genus one system in two ways. Firstly, we
compute $\cW_0$ and $\cW_1$ both from the discretised loop equations and from the combinatorial pair-partition formula of theorem \ref{thm:genus-g}. Secondlt, as an independent
we compare the truncated expansion $\cW_0+N^{-2}\cW_1$ with Monte Carlo estimates of the finite $N$ matrix development. We find good agreement on the example three paths:
\begin{center}
\begin{tabular}{lccc}
 & $\cW_0$ & $\cW_1$ (loop equations) & $\cW_1$ (combinatorial)\\
\hline
quarter-circle & $0.29200$ & $+0.071$ & $+0.07163$\\
twisted cubic  & $0.36738$ & $+0.060$ & $+0.06024$\\
circle         & $0.66937$ & $+0.332$ & $+0.33196$\\
\end{tabular}
\end{center}
Figure \ref{fig:volterra} further compares $\cW_0+\cW_1/N^2$ with the Monte Carlo Wilson line across $N=2$--$8$: the genus-$1$ term accounts for the deviation from the planar value on all three paths. We thus verify theorem \ref{thm:genus-g-loop} and the closure of the hierarchy at genus one.

\begin{figure}[ht]
\centering
\includegraphics[width=\textwidth]{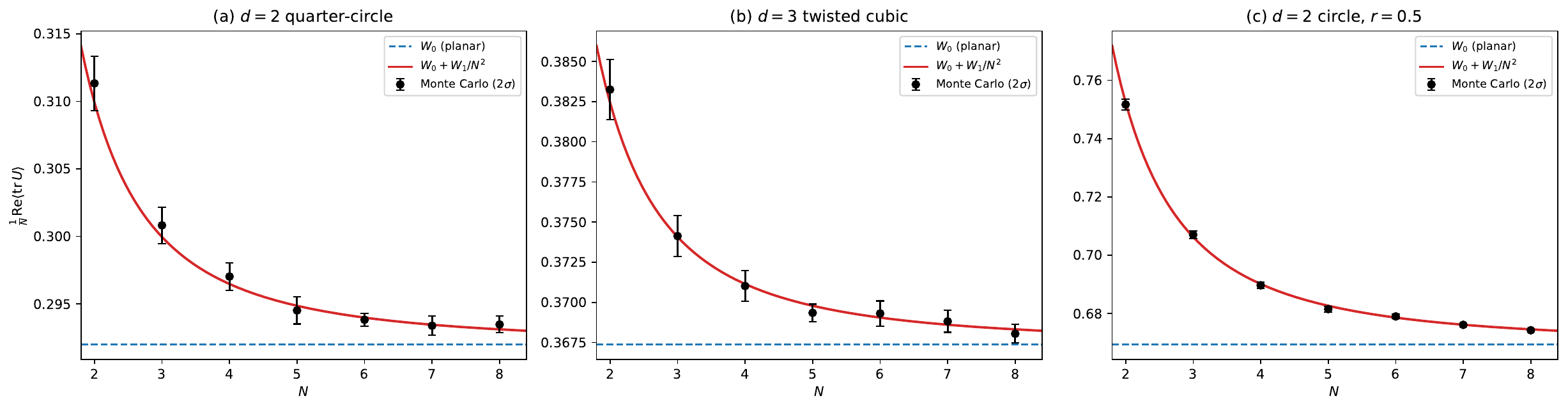}
\caption{Genus-$1$ correction to the finite-$N$ Wilson line ($2\sigma$ error bars). Monte Carlo values (points); $\cW_0+\cW_1/N^2$ (solid) and the planar value $\cW_0$ (dashed), for (a) the $d=2$ quarter-circle ($s=1.0$, $\cW_1=+0.0716$), (b) the $d=3$ twisted cubic ($a=0.75$, $\cW_1=+0.0602$) and (c) the $d=2$ Wilson loop ($r=0.5$, $\cW_1=+0.3320$).}
\label{fig:volterra}
\end{figure}

\section{Discussion}\label{sec:discussion}

In this work we have introduced a non-commutative Laplace transform between path functionals and the tensor algebra, together with a new convolution product on path space. The Laplace transform maps the path convolution to the concatenation product in the tensor algebra (analogously, the same transform maps the pointwise product of functionals to the shuffle product; Proposition~\ref{prop:ev-props}). As an application, performing the transform on randomised path developments relates the path-dependent Wilson line $\cW(\gamma)$ of Cass and Turner \cite{cass2026free} to an algebraic fixed-point equation in the tensor algebra. Together with topological recursion, the planar path development equation, its genus-$g$ extensions, and the closed-form combinatorial solutions are then all direct consequences of this algebraic structure. We conclude with a discussion of practical implications and directions for future work.

\paragraph{Kernel methods.}
From the perspective of kernel methods this framework offers two practical benefits. First, the spectral curve gives closed-form access to the Schwinger--Dyson kernel of \cite{cass2026free}, and the $1/N$ corrections quantify finite-$N$ approximation errors relevant to implementations \cite{lou2023pcf}. In practice, randomised path development features are computed at finite $N$, and the genus expansion $\cW(\gamma) = \cW_0(\gamma) + N^{-2}\cW_1(\gamma) + \cdots$ provides an asymptotic series. This suggests that (i) \textit{choosing $N$} for a desired kernel accuracy $\epsilon$, the genus-$1$ correction determines the minimal dimension via $N\gtrsim|\cW_1/\epsilon|^{1/2}$; (ii) \emph{enriched features}: $\cW_1(\gamma)$ is itself a path functional, probing the path through crossing pair partitions, $(\cW_0,\cW_1)$ forms a richer feature vector than $\cW_0$ alone; and (iii) \emph{quantum implementations}: the quantum signature kernel of \cite{crew2025quantum} is computed at finite $N$ on physical hardware, where $N$ is set by the logical qubit count, and the genus expansion quantifies the resulting finite-size effects. 

\paragraph{Interacting models.}\label{sec:interact}
The framework is also expected to extend beyond Gaussian random matrices. For a potential $V=\frac{1}{2}\sum_\mu A_\mu^2+V_{\mathrm{int}}$, integration by parts produces, alongside the universal Gaussian splitting $W\cdot\Phi(W)$, a correction encoding $\partial V_{\mathrm{int}}/\partial A_\mu$, deforming the spectral equation \eqref{eq:nc-spectral} to
\begin{equation}\label{eq:nc-int}
  W = 1 - W\cdot\Phi(W) - \sum_\mu G_\mu^{(\mathrm{int})}\xh_\mu,
\end{equation}
where $G_\mu^{(\mathrm{int})}$ generates the interaction-induced correlators. Applying $\ev_\gamma$ gives a generalised path development equation,
\begin{equation}\label{eq:gen-CT}
  \cW(\gamma;s,t) = 1 - \cW\star_\gamma\cW - \cW_{\mathrm{int}}(\gamma;s,t), \qquad \cW_{\mathrm{int}} = \sum_\mu\ev_\gamma\bigl(G_\mu^{(\mathrm{int})}\xh_\mu\bigr),
\end{equation}
with $V_{\mathrm{int}}=0$ recovering \eqref{eq:CT} in the large-$N$ limit. Unlike the Gaussian case, \eqref{eq:gen-CT} is not closed: $\cW_{\mathrm{int}}$ is determined by the full Schwinger--Dyson hierarchy. We have not developed its solution theory here but in principle perturbation theory may be applied.

\paragraph{Resurgence.}
The $1/N$ expansion is asymptotic, and its non-perturbative completion--- instanton contributions from eigenvalue tunnelling on the spectral curve--- should have a path-space counterpart. This is made suggestive by the tensor-algebra formulation, since the signature and \'{E}calle's resurgence monomials are both symmetral characters of different shuffle Hopf algebras \cite{fauvet2017ecalle}; it would be interesting to investigate the interplay of these two structures more closely---mould calculus and arborification on the one hand, and the branched rough path story on the other \cite{curry2020planarly}.

\subsection*{Acknowledgements}

We thank Masazumi Honda, Cris Salvi, Kurusch Ebrahimi-Fard and Will Turner for interesting discussions. SC additionally thanks Andrea Di Fini (CESQ) and colleagues Asier Pi\~neiro Orioli, Shannon Whitlock, Tom Hartweg, Hugo Perrin and Adrian Aasen at QPerfect for many stimulating conversations. IA would like to thank Heng-Yu Chen for the invitation to visit National Taiwan University, where the collaboration with SC on this project began. This work was conducted in part during an Oberwolfach Research in Pairs programme. The code used in the numerical validation section was written with the assistance of Anthropic's Claude (Opus 4.5). IA was partially supported by UKRI EPSRC Early Career Fellowship EP/S004076/1. TC is supported in part by UK Research and Innovation (UKRI) through the Engineering and Physical Sciences Research Council (EPSRC) via Programme Grants [Grant No. UKRI1010: High order mathematical and computational infrastructure for streamed data that enhance contemporary generative and large language models].

\appendix

\section{The closed genus-one system}
\label{app:closed}

In this appendix we derive the loop equation for the planar two-point function $\cW_{0,2}$ of \eqref{eq:W2-def}, showing that it is determined by the planar Wilson line $\cW_0$ alone. Together with the genus-one equation \eqref{eq:W1-int} this closes the loop equation hierarchy at genus one. The result is a consequence of the two trace analogue of the Schwinger--Dyson equation.

\subsection{The two-trace Schwinger--Dyson identity}

As in the main text, $A_1,\dots,A_d$ are independent GUE matrices with law $\propto\exp(-\tfrac N2\sum_\nu\tr A_\nu^2)$, $\langle\,\cdot\,\rangle$ is the corresponding average, and $A_I=A_{i_1}\cdots A_{i_n}$ for a word $I=\{i_1,\cdots,i_n\}$. Gaussian integration by parts now acts on a product of two traces.

\begin{lemma}\label{lem:two-trace}
For any words $I,J$ and any colour $\mu$, at every finite $N$,
\begin{equation}\label{eq:two-trace}
  \bigl\langle\tr(A_I A_\mu)\,\tr A_J\bigr\rangle = \frac1N\sum_{I=K\mu L}\bigl\langle\tr A_K\,\tr A_L\,\tr A_J\bigr\rangle + \frac1N\sum_{J=P\mu Q}\bigl\langle\tr(A_I\,A_Q A_P)\bigr\rangle.
\end{equation}
\end{lemma}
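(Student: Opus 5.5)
The plan is Gaussian integration by parts with respect to the entries of $A_\mu$, applied to the matrix-valued quantity $(A_I)_{ji}\,\tr A_J$ and summed over $i,j$. With the law $\propto\exp(-\tfrac N2\sum_\nu\tr A_\nu^2)$ the covariance is $\langle (A_\mu)_{ab}(A_\nu)_{cd}\rangle=\tfrac1N\delta_{\mu\nu}\delta_{ad}\delta_{bc}$. The identity to use is therefore
\begin{equation*}
\bigl\langle (A_\mu)_{ij}\,F(A)\bigr\rangle=\frac1N\Bigl\langle \frac{\partial F}{\partial (A_\mu)_{ji}}\Bigr\rangle,
\end{equation*}
valid for polynomial $F$. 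It follows from $\int\partial_{(A_\mu)_{ji}}\bigl[F e^{-\frac N2\tr A_\mu^2}\bigr]=0$ together with $\partial_{(A_\mu)_{ji}}\tfrac N2\tr A_\mu^2=N(A_\mu)_{ij}$. To avoid Hermitian-coordinate subtleties I would treat $(A_\mu)_{ij}$ and $(A_\mu)_{ji}$ as the complex coordinates $X_{ij}$ with Wick rule as above. Equivalently, I would just verify the covariance identity and use Wick's theorem to isolate one contraction of the distinguished $A_\mu$.

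Write $\tr(A_IA_\mu)\tr A_J=\sum_{i,j}(A_\mu)_{ij}\,(A_I)_{ji}\,\tr A_J$ and take $F=(A_I)_{ji}\tr A_J$. The derivative splits by the Leibniz rule into two groups of terms.

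For the first group, the derivative hits a letter $\mu$ inside $A_I$. For $I=K\mu L$, $\partial(A_KA_\mu A_L)_{ji}/\partial(A_\mu)_{ji}$ gives $(A_K)_{jj}(A_L)_{ii}$. Summing over $i,j$ yields $\tr A_K\,\tr A_L$, times the untouched factor $\tr A_J$. This produces the first sum $\tfrac1N\sum_{I=K\mu L}\langle\tr A_K\tr A_L\tr A_J\rangle$.

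For the second group, the derivative hits a letter $\mu$ inside $\tr A_J$. For $J=P\mu Q$, $\partial\tr(A_PA_\mu A_Q)/\partial(A_\mu)_{ji}=(A_QA_P)_{ij}$, by cyclicity of the trace. Contracting with $(A_I)_{ji}$ and summing gives $\tr(A_IA_QA_P)$. This produces the second sum, with coefficient $\tfrac1N$.

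The main point needing care is the index bookkeeping and the normalisation. I must confirm that the derivative with respect to $(A_\mu)_{ji}$ pairs with $(A_\mu)_{ij}$ with weight exactly $1/N$, and that the indices close into two separate traces in the first group but merge into a single trace in the second. That is the one-line check $\sum_{i,j}(A_I)_{ji}(A_QA_P)_{ij}=\tr(A_IA_QA_P)$.

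Finally, the exchange of derivative and integral, and the vanishing of boundary terms, are immediate for polynomial integrands against the Gaussian weight. Independence of the colours ensures that only letters equal to $\mu$ contribute.
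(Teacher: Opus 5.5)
Your proof is correct and follows the route the paper intends. The paper only cites this as the standard two-trace Schwinger--Dyson equation (Gaussian integration by parts, as sketched in Section~\ref{sec:spectral}); your computation with $\langle (A_\mu)_{ab}(A_\nu)_{cd}\rangle=\frac{1}{N}\delta_{\mu\nu}\delta_{ad}\delta_{bc}$ supplies that derivation, with the right $1/N$ normalisation and the correct cyclic ordering $A_QA_P$ in the second sum.
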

\begin{proof}
This is the standard two-trace Schwinger--Dyson equation \cite{migdal1983loop,ambjorn1990multiloop,eynard2016counting}.
\end{proof}

\subsection{The planar two-point function}

Taking the connected part of \eqref{eq:two-trace} and then its planar limit gives a linear equation for the planar two-point function. 

\begin{proposition}[Loop equation for the two-point function]
\label{prop:W2-loop}
The planar two-point function $\cW_{0,2}(a,b;c,d)$ satisfies
\begin{equation}\label{eq:W2-loop}
  \cW_{0,2}=L_0\cW_{0,2}+\beta_0,
\end{equation}
where
\begin{equation}\label{eq:beta0}
\beta_0(a,b;c,d) = -\int_a^b\!\!\int_c^d \bigl\langle\dd\gamma(u),\dd\gamma(v)\bigr\rangle \cW_0\bigl( \gamma|_{[a,u]}* \gamma|_{[v,d]}* \gamma|_{[c,v]} \bigr).
\end{equation}
Hence $\cW_{0,2}$ is determined by $\cW_0$.
\end{proposition}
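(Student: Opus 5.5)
We follow the template of Theorems~\ref{thm:nc-spectral} and~\ref{thm:CT}, now applied to the two-trace identity of Lemma~\ref{lem:two-trace}. First we lift everything to the algebra with two independent alphabets. Introduce a second set of non-commuting letters $\hat y_\nu$, commuting with the $\xh_\mu$, and form the connected generating series
\begin{equation*}
W_{0,2}=\sum_{I,J} M^{\mathrm{conn}}_{I;J;0}\,\xh_I\,\hat y_J ,
\end{equation*}
where $M^{\mathrm{conn}}_{I;J;0}$ denotes the planar limit of the unnormalised connected correlator \eqref{eq:conn-def}. This is the $O(1)$ term, since connected two-trace correlators of GUE words are $O(N^0)$. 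Evaluating the $\xh$-letters against the signature of $\gamma|_{[a,b]}$ and the $\hat y$-letters against the signature of $\gamma|_{[c,d]}$ gives $\cW_{0,2}(a,b;c,d)$ by \eqref{eq:W2-def} and the expansion \eqref{eq:U-expand}. Exponential growth of the coefficients follows from the Wick bound, exactly as for $W$.

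Next we take the connected part of \eqref{eq:two-trace}. Subtract $\langle\tr(A_IA_\mu)\rangle\langle\tr A_J\rangle$, and rewrite the first-trace side using the one-trace identity \eqref{eq:SD-full} multiplied by $\langle\tr A_J\rangle$. The three-trace term $\frac1N\langle\tr A_K\tr A_L\tr A_J\rangle$ is then expanded in cumulants. The fully disconnected piece cancels against the subtraction. The pieces $\langle\tr A_K\rangle\langle\tr A_L\tr A_J\rangle_c$ and $\langle\tr A_L\rangle\langle\tr A_K\tr A_J\rangle_c$ survive at leading order, after normalising $\frac1N\langle\tr\rangle\to M$. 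The remaining pieces ($\langle\tr A_K\tr A_L\rangle_c\langle\tr A_J\rangle$ cancels with the subtracted SD term, and the three-point cumulant is suppressed by $N^{-2}$) drop out in the planar limit. The second term $\frac1N\langle\tr(A_IA_QA_P)\rangle$ tends to the planar one-trace moment $M_{IQP}$. Inserting the factors $i^{|\cdot|}$ and reassembling by last letter as in \eqref{eq:W-peel}--\eqref{eq:Gmu-nc}, this yields the algebraic identity
\begin{equation*}
W_{0,2}=-\sum_\mu\bigl(W_0\,\xh_\mu\,W_{0,2}\,\xh_\mu+W_{0,2}\,\xh_\mu\,W_0\,\xh_\mu\bigr)+B_0 ,
\end{equation*}
where $B_0=-\sum_{\mu}\sum_{I,P,Q}M_{IQP}\,\xh_I\xh_\mu\,\hat y_P\hat y_\mu\hat y_Q$, and the $\xh$-products act only on the $\xh$-alphabet. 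Signs must be tracked: $i^{|I|+1}i^{|J|}=-i^{|I|+|Q|+|P|}\cdot i^{2}\cdot(-1)$ up to the overall $-$ of the SD recursion. This bookkeeping is the step we expect to be most error-prone, and we would check it against the degree-$2$ case $W_{0,2}\ni M^{\mathrm{conn}}_{\mu;\mu}\xh_\mu\hat y_\mu$.

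Finally we evaluate. The first two terms are, by the argument of Theorem~\ref{thm:CT} and Proposition~\ref{prop:ev-props}(ii) applied in the $\xh$-variables with the $\hat y$-variables as spectators, exactly $L_0\cW_{0,2}$ on $[a,b]$. For $B_0$ we use Proposition~\ref{prop:ev-props}(i) and its proof to condition on the marked times. The $\xh$-letter $\mu$ sits at time $u\in[a,b]$, and the $\xh_I$ block is integrated over $[a,u]$. The $\hat y$-letter $\mu$ sits at $v\in[c,d]$, with $\hat y_P$ on $[c,v]$ and $\hat y_Q$ on $[v,d]$. The coefficient $M_{IQP}$ multiplied by $S^I_{a,u}S^Q_{v,d}S^P_{c,v}$ and summed is, by Chen's identity, the evaluation of $W_0$ on the signature of the concatenation $\gamma|_{[a,u]}*\gamma|_{[v,d]}*\gamma|_{[c,v]}$, i.e.\ $\cW_0$ of that path. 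Contracting $\mu$ gives $\langle\dd\gamma(u),\dd\gamma(v)\rangle$, so $\ev(B_0)=\beta_0$. The statement that $\cW_{0,2}$ is determined by $\cW_0$ then follows because the equation is linear and strictly degree-raising in $W_{0,2}$: uniqueness holds inductively in degree as in Theorem~\ref{thm:nc-spectral}, and on path space it is a Volterra equation in $b$.
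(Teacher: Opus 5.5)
Your proposal is correct and follows essentially the paper's proof: take the connected part of the two-trace identity, factorise the three-trace cumulant at leading order, derive the planar recursion with the $i^2=-1$ signs, then evaluate by conditioning on the marked times, using Chen's identity to produce the source $\beta_0$. The differences are cosmetic: you package the recursion as an identity for a two-alphabet series $W_{0,2}$ before evaluating, whereas the paper multiplies directly by $S^{I\mu}_{a,b}S^J_{c,d}$ and sums; you also add an explicit degree-induction argument for uniqueness, which the paper leaves implicit.
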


\begin{proof}
The connected correlator is
\begin{equation}
\langle X;Y\rangle_{\mathrm c}
=
\langle XY\rangle-\langle X\rangle\langle Y\rangle
\end{equation}
Subtracting the product of the one-trace identity with $\langle\tr A_J\rangle$ from \eqref{eq:two-trace} gives
\begin{equation}\label{eq:two-trace-conn}
\begin{aligned}
\bigl\langle\tr(A_I A_\mu);\tr A_J\bigr\rangle_{\mathrm c} =& \frac{1}{N}\sum_{I=K\mu L} \bigl\langle\tr A_K\,\tr A_L;\tr A_J\bigr\rangle_{\mathrm c}
\\ &+ \frac{1}{N}\sum_{J=P\mu Q} \bigl\langle\tr(A_I A_Q A_P)\bigr\rangle.
\end{aligned}
\end{equation}
At leading order in $N$,
\begin{equation}\label{eq:three-trace-factorisation}
\begin{aligned}
\bigl\langle\tr A_K\,\tr A_L;\tr A_J\bigr\rangle_{\mathrm c}
=& \langle\tr A_K\rangle \bigl\langle\tr A_L;\tr A_J\bigr\rangle_{\mathrm c}
\\ &+ \bigl\langle\tr A_K;\tr A_J\bigr\rangle_{\mathrm c} \langle\tr A_L\rangle +O(N^{-1}).
\end{aligned}
\end{equation}
Define the planar connected moments by
\begin{equation}\label{eq:planar-connected-moments}
\begin{aligned}
M^{(0),\mathrm c}_{I;J}
&=
\lim_{N\to\infty}
\bigl\langle
\tr\bigl((iA)_I\bigr);
\tr\bigl((iA)_J\bigr)
\bigr\rangle_{\mathrm c}
\end{aligned}
\end{equation}
Using \eqref{eq:three-trace-factorisation} in \eqref{eq:two-trace-conn} gives
\begin{equation}\label{eq:two-trace-planar-recursion}
M^{(0),\mathrm c}_{I\mu;J} = -\sum_{I=K\mu L} \left( M_K M^{(0),\mathrm c}_{L;J} + M^{(0),\mathrm c}_{K;J}M_L \right) -\sum_{J=P\mu Q}M_{IQP}.
\end{equation}
The signs come from the factor $i^2=-1$ associated with each
contraction.

Expanding the two developments against their signatures gives
\begin{equation}\label{eq:W02-signature-expansion}
\cW_{0,2}(a,b;c,d) = \sum_{I,J} M^{(0),\mathrm c}_{I;J} S^I_{a,b}S^J_{c,d}.
\end{equation}
Multiply \eqref{eq:two-trace-planar-recursion} by $S^{I\mu}_{a,b}S^J_{c,d}$ and sum over $I,J,\mu$. The first sum gives
\begin{equation}
-\int_{a<u<r<b} \bigl\langle\dd\gamma(u),\dd\gamma(r)\bigr\rangle \left[ \cW_0(a,u)\cW_{0,2}(u,r;c,d) + \cW_{0,2}(a,u;c,d)\cW_0(u,r) \right],
\end{equation}
which is $L_0\cW_{0,2}$.

In the second sum, the occurrence of $\mu$ in $J=P\mu Q$ marks a point $v$ on the second boundary. Summing over $\mu$ and using Chen's identity gives
\begin{equation}
-\int_a^b\!\!\int_c^d \bigl\langle\dd\gamma(u),\dd\gamma(v)\bigr\rangle \cW_0\bigl( \gamma|_{[a,u]}* \gamma|_{[v,d]}*\gamma|_{[c,v]} \bigr),
\end{equation}
which is $\beta_0$. 
\end{proof}

\subsection{The closed hierarchy}

Collecting Proposition \ref{prop:W2-loop} with Section \ref{ex:g01} and writing the integrals explicitly we see that the genus one corrections solve the loop equation system
\begin{align}
  \cW_0(a,b) &= 1 - \int_{a<u<r<b}\bigl\langle\dd\gamma(u),\dd\gamma(r)\bigr\rangle\, \cW_0(a,u)\,\cW_0(u,r), \label{eq:int-0-app}\\
    \cW_{0,2}(a,b;c,d) &= -\int_{a<u<r<b}\bigl\langle\dd\gamma(u),\dd\gamma(r)\bigr\rangle\, \Bigl[\cW_0(a,u)\,\cW_{0,2}(u,r;c,d) + \cW_{0,2}(a,u;c,d)\,\cW_0(u,r)\Bigr] \notag\\
  &\qquad -\int_a^b\!\!\int_c^d\bigl\langle\dd\gamma(u),\dd\gamma(v)\bigr\rangle\,\cW_0\bigl(\gamma|_{[a,u]}*\gamma|_{[v,d]}*\gamma|_{[c,v]}\bigr), \label{eq:int-2}\\
  \cW_1(a,b) &= -\int_{a<u<r<b}\bigl\langle\dd\gamma(u),\dd\gamma(r)\bigr\rangle\, \Bigl[\cW_0(a,u)\,\cW_1(u,r) + \cW_1(a,u)\,\cW_0(u,r)\Bigr]\nonumber\\
  & \hspace{15pt}-\int_{a<u<r<b}\bigl\langle\dd\gamma(u),\dd\gamma(r)\bigr\rangle\, \cW_{0,2}(a,u;u,r)\label{eq:int-1-app}
\end{align}
with boundary data $\cW_0(a,a)=1$ and $\cW_{0,2}(a,a;c,d)=\cW_1(a,a)=0$. The equations are solved in order: the first equation determines $\cW_0$, then the second determines $\cW_{0,2}$, and finally we sole the third equation to determine $\cW_1$.

\bibliographystyle{plain}
\bibliography{sig_laplace}

\begin{thebibliography}{10}

\bibitem{ambjorn1990multiloop}
Jan Ambj{\o}rn, Jerzy Jurkiewicz, and Yu~M Makeenko.
\newblock Multiloop correlators for two-dimensional quantum gravity.
\newblock {\em Physics Letters B}, 251(4):517--524, 1990.

\bibitem{anderson2010introduction}
Greg~W Anderson, Alice Guionnet, and Ofer Zeitouni.
\newblock {\em An introduction to random matrices}.
\newblock Number 118. Cambridge university press, 2010.

\bibitem{cass2024lecture}
Thomas Cass and Cristopher Salvi.
\newblock Lecture notes on rough paths and applications to machine learning.
\newblock {\em arXiv preprint arXiv:2404.06583}, 2024.

\bibitem{cass2026free}
Thomas Cass and William~F Turner.
\newblock Free probability, path developments and signature kernels as
  universal scaling limits.
\newblock {\em The Annals of Applied Probability}, 36(2):1082--1109, 2026.

\bibitem{chen1954iterated}
Kuo-Tsai Chen.
\newblock Iterated integrals and exponential homomorphisms.
\newblock {\em Proceedings of the London Mathematical Society}, 3(1):502--512,
  1954.

\bibitem{chen1957integration}
Kuo-Tsai Chen.
\newblock Integration of paths, geometric invariants and a generalized
  baker-hausdorff formula.
\newblock {\em Annals of Mathematics}, 65(1):163--178, 1957.

\bibitem{chevyrev2025primer}
Ilya Chevyrev and Andrey Kormilitzin.
\newblock A primer on the signature method in machine learning.
\newblock In {\em Signature Methods in Finance: An Introduction with
  Computational Applications}, pages 3--64. Springer, 2025.

\bibitem{chevyrev2016characteristic}
Ilya Chevyrev and Terry Lyons.
\newblock Characteristic functions of measures on geometric rough paths.
\newblock 2016.

\bibitem{cirone2023neural}
Nicola~Muca Cirone, Maud Lemercier, and Cristopher Salvi.
\newblock Neural signature kernels as infinite-width-depth-limits of controlled
  resnets.
\newblock In {\em International Conference on Machine Learning}, pages
  25358--25425. PMLR, 2023.

\bibitem{crew2025quantum}
Samuel Crew, Cristopher Salvi, William~F Turner, Thomas Cass, and Antoine
  Jacquier.
\newblock Quantum path signatures.
\newblock {\em arXiv preprint arXiv:2508.05103}, 2025.

\bibitem{curry2020planarly}
Charles Curry, Kurusch Ebrahimi-Fard, Dominique Manchon, and Hans~Z
  Munthe-Kaas.
\newblock Planarly branched rough paths and rough differential equations on
  homogeneous spaces.
\newblock {\em Journal of Differential Equations}, 269(11):9740--9782, 2020.

\bibitem{eynard2016counting}
Bertrand Eynard et~al.
\newblock Counting surfaces.
\newblock {\em Progress in Mathematical Physics}, 70:414, 2016.

\bibitem{eynard2007invariants}
Bertrand Eynard and Nicolas Orantin.
\newblock Invariants of algebraic curves and topological expansion.
\newblock {\em Communications in Number Theory and Physics}, 1(2):347--452,
  2007.

\bibitem{fauvet2017ecalle}
Fr{\'e}d{\'e}ric Fauvet and Fr{\'e}d{\'e}ric Menous.
\newblock Ecalle's arborification-coarborification transforms and
  connes-kreimer hopf algebra.
\newblock In {\em Annales scientifiques de l'{\'E}cole Normale Sup{\'e}rieure},
  volume~50, pages 39--83, 2017.

\bibitem{guionnet2019asymptotics}
Alice Guionnet.
\newblock {\em Asymptotics of random matrices and related models: the uses of
  Dyson-Schwinger equations}, volume 130.
\newblock American Mathematical Soc., 2019.

\bibitem{guionnet2007second}
Alice Guionnet and Edouard Maurel-Segala.
\newblock Second order asymptotics for matrix models.
\newblock 2007.

\bibitem{hambly2010uniqueness}
Ben Hambly and Terry Lyons.
\newblock Uniqueness for the signature of a path of bounded variation and the
  reduced path group.
\newblock {\em Annals of Mathematics}, pages 109--167, 2010.

\bibitem{kidger2019deep}
Patrick Kidger, Patric Bonnier, Imanol Perez~Arribas, Cristopher Salvi, and
  Terry Lyons.
\newblock Deep signature transforms.
\newblock {\em Advances in neural information processing systems}, 32, 2019.

\bibitem{kiraly2019kernels}
Franz~J Kir{\'a}ly and Harald Oberhauser.
\newblock Kernels for sequentially ordered data.
\newblock {\em Journal of Machine Learning Research}, 20(31):1--45, 2019.

\bibitem{lou2023pcf}
Hang Lou, Siran Li, and Hao Ni.
\newblock Pcf-gan: generating sequential data via the characteristic function
  of measures on the path space.
\newblock {\em Advances in Neural Information Processing Systems},
  36:39755--39781, 2023.

\bibitem{lyons1998differential}
Terry~J Lyons.
\newblock Differential equations driven by rough signals.
\newblock {\em Revista Matem{\'a}tica Iberoamericana}, 14(2):215--310, 1998.

\bibitem{lyons2007differential}
Terry~J Lyons, Michael Caruana, and Thierry L{\'e}vy.
\newblock {\em Differential equations driven by rough paths: Ecole d'Et{\'e} de
  Probabilit{\'e}s de Saint-Flour XXXIV-2004}.
\newblock Springer, 2007.

\bibitem{migdal1983loop}
Aleksandr~Arkad'evi{\v{c}} Migdal.
\newblock Loop equations and 1n expansion.
\newblock {\em Physics Reports}, 102(4):199--290, 1983.

\bibitem{mingo2017free}
James~A Mingo and Roland Speicher.
\newblock {\em Free probability and random matrices}, volume~35.
\newblock Springer, 2017.

\bibitem{nica2006lectures}
Alexandru Nica and Roland Speicher.
\newblock {\em Lectures on the combinatorics of free probability}, volume~13.
\newblock Cambridge University Press, 2006.

\bibitem{reutenauer2003free}
Christophe Reutenauer.
\newblock Free lie algebras.
\newblock In {\em Handbook of algebra}, volume~3, pages 887--903. Elsevier,
  2003.

\bibitem{salvi2021signature}
Cristopher Salvi, Thomas Cass, James Foster, Terry Lyons, and Weixin Yang.
\newblock The signature kernel is the solution of a goursat pde.
\newblock {\em SIAM Journal on Mathematics of Data Science}, 3(3):873--899,
  2021.

\bibitem{t1993planar}
Gerard 't~Hooft.
\newblock A planar diagram theory for strong interactions.
\newblock In {\em The Large N Expansion In Quantum Field Theory And Statistical
  Physics: From Spin Systems to 2-Dimensional Gravity}, pages 80--92. World
  Scientific, 1993.

\bibitem{voiculescu1986addition}
Dan Voiculescu.
\newblock Addition of certain non-commuting random variables.
\newblock {\em Journal of functional analysis}, 66(3):323--346, 1986.

\bibitem{wigner1993characteristic}
Eugene~P Wigner.
\newblock Characteristic vectors of bordered matrices with infinite dimensions
  i.
\newblock In {\em The Collected Works of Eugene Paul Wigner: Part A: The
  Scientific Papers}, pages 524--540. Springer, 1993.

\bibitem{yang2022developing}
Weixin Yang, Terry Lyons, Hao Ni, Cordelia Schmid, and Lianwen Jin.
\newblock Developing the path signature methodology and its application to
  landmark-based human action recognition.
\newblock In {\em Stochastic Analysis, Filtering, and Stochastic Optimization:
  A Commemorative Volume to Honor Mark HA Davis's Contributions}, pages
  431--464. Springer, 2022.

\end{thebibliography}

\end{document}